\newtheorem{theorem}{Theorem}
\newtheorem{remark}{Remark}
{
  \theoremstyle{definition}
  
}
{
  \theoremstyle{definition}
  \newtheorem{assumptioniden}{}
}
{
  \theoremstyle{definition}
  \newtheorem{example}{Example}[section]
}
\newcommand{\supp}{\mathop{\mathrm{supp}}}
\newtheorem{lemma}{Lemma} \newtheorem{coro}{Corollary}
\DeclareMathOperator{\expit}{expit}
\DeclareMathOperator{\bern}{Bern}
 \DeclareMathOperator{\var}{Var}
\newcommand{\indep}{\mbox{$\perp\!\!\!\perp$}} 
 \newcommand{\dd}{\mathrm{d}}
\newcommand{\Pn}{\mathbb{P}_{n}}
\newcommand{\Gn}{\mathbb{G}_{n}}
\newcommand{\Gnj}{\mathbb{G}_{n,j}}
\newcommand{\Pnj}{\mathbb{P}_{n,j}}
\newcommand{\thetasub}{\hat\theta_{\mbox{\scriptsize sub}}(\delta)}
\newcommand{\thetare}{\hat\theta_{\mbox{\scriptsize re}}(\delta)}
\newcommand{\thetaaipw}{\hat\theta(\delta)}
\newcommand{\hgd}{\hat g_\delta}
\newcommand{\hm}{\hat m}
\newcommand{\prob}{\mathbb{P}}
\newcommand{\one}{\mathds{1}} \renewcommand{\P}{\mathbb{P}}
\renewenvironment{proof}{{\it Proof }}{\qed \\}
\renewcommand{\P}{\mathbb{P}}
\newcommand{\E}{\mathbb{E}}
\newcommand{\M}{\mathcal{M}}
\DeclarePairedDelimiterX{\norm}[1]{\lVert}{\rVert}{#1}
\tikzset{
>=stealth',
punkt/.style={
rectangle,
rounded corners,
draw=black, very thick,
text width=6.5em,
minimum height=2em,
text centered},
pil/.style={
->,
thick,
shorten <=2pt,
shorten >=2pt,}
}
\newcommand{\Vertex}[2]
{\node[minimum width=0.6cm,inner sep=0.05cm] (#2) at (#1) {$\footnotesize#2$};
}
\newcommand{\Vertexr}[2]
{\node[rectangle, draw, minimum width=0.6cm,inner sep=0.05cm] (#2) at (#1) {$\footnotesize#2$};
}
\newcommand{\EdgeR}[3]%
{ \begin{pgfonlayer}{background}
\draw[dashed,#3] (#1) to[bend right=30] (#2);
\end{pgfonlayer}
}
\newcommand{\EdgeL}[3]%
{ \begin{pgfonlayer}{background}
\draw[dashed,#3] (#1) to[bend left=30] (#2);
\end{pgfonlayer}
}
\newcommand{\EdgeLL}[3]%
{ \begin{pgfonlayer}{background}
\draw[dashed,#3] (#1) to[bend left=90] (#2);
\end{pgfonlayer}
}
\newcommand{\Arrow}[3]%
{ \begin{pgfonlayer}{background}
\draw[->,#3] (#1) -- (#2);
\end{pgfonlayer}}
\newcommand{\ArrowR}[3]%
{ \begin{pgfonlayer}{background}
\draw[->,#3] (#1) to[bend right=30] (#2);
\end{pgfonlayer}
}
\newcommand{\ArrowL}[3]%
{ \begin{pgfonlayer}{background}
\draw[->,#3] (#1) to[bend left=45] (#2);
\end{pgfonlayer}
}
\title{Causal mediation analysis for stochastic interventions}
\date{\today}
\author[1]{Iv\'an D\'iaz \thanks{corresponding author:
    ild2005@med.cornell.edu}}
\author[2,3]{Nima S.~Hejazi}
\affil[1]{\small Division of Biostatistics,
  Weill Cornell Medicine.}
\affil[2]{\small Graduate Group in Biostatistics,
  University of California, Berkeley.}
\affil[3]{\small Center for Computational Biology,
  University of California, Berkeley.}
\begin{document}
\maketitle

\begin{abstract}
  Mediation analysis in causal inference has traditionally focused on
  binary exposures and deterministic interventions, and a
  decomposition of the average treatment effect in terms of direct and
  indirect effects. In this paper we present an analogous
  decomposition of the \textit{population intervention effect},
  defined through stochastic interventions on the exposure. Population
  intervention effects provide a generalized framework in which a
  variety of interesting causal contrasts can be defined, including
  effects for continuous and categorical exposures. We show that
  identification of direct and indirect effects for the population
  intervention effect requires weaker assumptions than its average
  treatment effect counterpart, under the assumption of no
  mediator-outcome confounders affected by exposure. In particular,
  identification of direct effects is guaranteed in experiments that
  randomize the exposure and the mediator. We discuss various
  estimators of the direct and indirect effects, including
  substitution, re-weighted, and efficient estimators based on
  flexible regression techniques, allowing for multivariate
  mediators. Our efficient estimator is asymptotically linear under a
  condition requiring $n^{1/4}$-consistency of certain regression
  functions. We perform a simulation study in which we assess the
  finite-sample properties of our proposed estimators. We present the
  results of an illustrative study where we assess the effect of
  participation in a sports team on BMI among children, using
  mediators such as exercise habits, daily consumption of snacks, and
  overweight status.
\end{abstract}

\section{Introduction}
Mediation analysis is a powerful analytical tool that allows
scientists to unveil the mechanisms through which causal effects
operate. The development of tools for mediation analysis has a long
history in the statistical sciences, starting with the early work of
\cite{wright1921correlation, wright1934method} on path analysis, which
provided the foundations for the later development of mediation
analysis using structural equation models
\citep{goldberger1972structural}. Indeed, one of the most widely used
mediation analysis methods is based on structural equations
\citep{baron1986moderator}. Recent decades have seen a revolution in
the field of causal inference from observational and randomized
studies, starting with the seminal work of \cite{Rubin74} on the
potential outcomes framework, which is itself rooted in ideas dating
back to \cite{Neyman23}. More recently, \cite{pearl1995causal,Pearl00}
has developed a causal inference framework using non-parametric
structural equation models, directed acyclic graphs, and the so-called
do-calculus. Related approaches have been proposed by \cite{Robins86},
\cite{spirtes2000causation}, \cite{dawid2000causal}, and
\cite{richardson2013single}. These frameworks allow researchers to
define causal effects non-parametrically, and to assess the conditions
under which causal effects can be identified from data. In particular,
novel tools have uncovered important limitations of the earlier work
on parametric structural equation models for mediation analysis
\citep{pearl1998graphs,imai2010general}. Essentially, structural
equation models impose implausible assumptions on the data generating
mechanism, and are thus of limited applicability to complex phenomena
in biology, health, economics, and the social sciences. For example,
modern causal models have revealed the incorrectness of the widely
popular method of \cite{baron1986moderator} in several important
cases, such as in the presence of confounders of the mediator-outcome
relationship \citep{cole2002fallibility}.

Using the potential outcomes framework, \cite{Robins&Greenland92}
introduced a non-parametric decomposition of the causal effect of a
binary exposure into so-called natural indirect and direct
effects. The indirect effect quantifies the effect on the outcome
through the mediator and the direct effect quantifies the effect
through all other mechanisms. \cite{Pearl01} arrived at an equivalent
effect decomposition using non-parametric structural equation
models. The identification of these natural (in)direct effects relies
on so-called cross-world counterfactual independencies, i.e.,
independencies on counterfactual variables indexed by distinct
hypothetical interventions. An important consequence of this
definition is that the natural (in)direct effect is not identifiable
in a randomized trial, which is problematic as it implies that
scientific claims obtained from these models are not falsifiable
through experimentation
\citep{Popper34,dawid2000causal,robins2010alternative}.

In an attempt to solve these problems, several authors have proposed
methods that do away with cross-world counterfactual
independencies. These methods can be divided in two types:
identification of bounds
\citep{robins2010alternative,tchetgen2014bounds,miles2015partial}, and
alternative definitions of the (in)direct effect
\citep{petersen2006estimation,van2008direct,vansteelandt2012natural,
  vanderweele2014effect}. Here, we take the second approach, defining
the (in)direct effect in terms of a decomposition of the total effect
of a stochastic intervention on the population exposure. 

Most causal inference problems consider deterministic interventions
that set each unit's exposure to some fixed value that could be a
function of the unit's baseline variables. Stochastic interventions
are a generalization of this framework, and are loosely defined as
interventions which yield an exposure that is a random variable after
conditioning on baseline variables. Estimation of total effects of
stochastic interventions was first considered by
\cite{stock1989nonparametric} and has been the subject of recent study
\citep{robins2004effects, didelez2006direct, tian2008identifying,
  Pearl2009, taubman2009intervening, stitelman2010impact,
  diaz2013assessing, dudik2014doubly, Haneuse2013,
  young2014identification}. Particularly relevant to this work are the
methods of \cite{Diaz12,Haneuse2013} who define total effects for
modified treatment policies, and \cite{kennedy2018nonparametric}, who
study identification and estimation of the total the effect of
propensity score interventions that shift a binary exposure
distribution. These papers do not address decomposition of the effects
of stochastic interventions on the exposure into direct and indirect
effects, which is the central theme of our manuscript.

Our methods are also related to a family of new direct and indirect
effects
\citep{didelez2006direct,vanderweele2014effect,lok2016defining,
  vansteelandt2017interventional,zheng2017longitudinal,rudolph2017robust,lok2019causal},
which have been collectively termed \textit{interventional effects}
\citep{nguyen2019clarifying}. This family of effects deals with binary
exposures and deterministic interventions on the exposure, and is thus
not entirely related to our approach, which deals with both continuous
and categorical exposures and stochastic interventions on the
exposure. Like the effects on the treated of
\cite{vansteelandt2012natural}, interventional effects share the
no-cross-world-independence property of our methods. The interested
reader is referred to \cite{nguyen2019clarifying} for a taxonomy of
the several mediation analyses proposed in the causal inference
literature up to date.

Stochastic interventions have analytical advantages compared to their
deterministic counterparts, such as allowing the seamless definition
of causal effects for continuous exposures with an interpretation that
is familiar to regular users of linear regression adjustment. For
example, \cite{Haneuse2013} assess the effect of an intervention that
reduces a patient's operating time (i.e., the time spent in surgery)
on the risk of post-operative outcomes among patients undergoing
surgical resection non-small-cell lung cancer.  \cite{Diaz12} study
the effect of increasing the amount of leisure time physical activity
in the elderly on subsequent all-cause
mortality. \cite{diaz2013assessing} study the effect of a
(hypothetical) policy that enforces pollution levels below a certain
cutoff point. \cite{kennedy2018nonparametric} shows that stochastic
interventions can also be used in longitudinal studies to define and
estimate total effects without relying on the positivity assumption.

In this article, we propose a decomposition of the effect of a
stochastic intervention into a direct and an indirect effect, with
interpretation analogous to that originally proposed by
\cite{Robins&Greenland92} and \cite{Pearl01}. We show that the
identification of (in)direct effects based on stochastic interventions
does not require cross-world counterfactual independencies, therefore
yielding scientific results that can be tested through experimentation
on both the exposure and mediator. 
Of high practical relevance, our
proposal also allows the definition and estimation of non-parametric
mediated effects for continuous exposures, a problem for which no
methods or software exist. Parametric mediation methods such as those
discussed by \cite{vansteelandt2012imputation} induce unquantifiable
amounts of bias by imposing untestable and implausible parametric
assumptions on the distribution of cross-world counterfactuals.

We develop a one-step non-parametric estimator based on the efficient
influence function, incorporating flexible regression tools from the
machine learning literature, and provide $n^{1/2}$-rate convergence
and asymptotic linearity results. We propose methods to use these
asymptotic distributions to construct confidence regions and to test
the null hypothesis of no direct effect. Our estimator has roots in
semiparametric estimation theory
\citep[e.g.,][]{pfanzagl1982contributions, begun1983information,
  vanderVaart91, newey1994asymptotic, Bickel97}, and in the targeted
learning framework of
\cite{vdl2006targeted,vanderLaanRose11,vanderLaanRose18}. In
particular, we use cross-fitting in order to obtain
$n^{1/2}$-convergence of our estimators while avoiding entropy
conditions that may be violated by the data adaptive estimators we use
\citep{zheng2011cross,chernozhukov2018double}. Our estimators use a
re-parameterization of certain integrals as conditional expectations
in order to accommodate multivariate mediators. Software implementing
our methods is provided in the form of an open source \texttt{R} package freely
available on GitHub.

\section{Mediation analysis for population intervention effects}
Let $A$ denote a continuous or categorical exposure variable, let $Y$
denote a continuous or binary outcome, let $Z$ denote a multivariate
mediator, and let $W$ denote a vector of observed covariates. Let
$O = (W, A, Z, Y)$ represent a random variable with distribution
$\P$. We use $\Pn$ to denote the empirical distribution of a sample of
$n$ i.i.d. observations $O_1, \ldots, O_n$. We let
$\P f = \int f(o)\dd \P(o)$ for a given function $f(o)$, and use $\E$
to denote expectations with respect to $\P$. We assume $\P \in \M$,
where $\M$ is the nonparametric statistical model defined as all
continuous densities on $O$ with respect to a dominating measure
$\nu$. Let $p$ denote the corresponding probability density
function. We use $g(a \mid w)$ to denote the probability density
function or the probability mass function of $A$ conditional on
$W = w$; $m(a,z,w)$ and $b(a,w)$ to denote the outcome regression
functions $\E(Y \mid A = a,Z = z,W = w)$ and
$\E(Y \mid A = a, W = w)$, respectively; and $e(a \mid z, w)$ to
denote the conditional density or probability mass function of $A$
conditional on $(Z, W)$. Let $g(a \mid w)$ be dominated by a measure
$\kappa(a)$ (e.g., the counting measure for binary $A$ and the
Lebesgue measure for continuous $A$). We use $q(z \mid a,w)$ and
$r(z \mid w)$ to denote the corresponding conditional densities of
$Z$. The parametrization $e = g q / r$ will prove fundamental in the
construction of our estimators, since it will allow us to avoid
estimation of multivariate conditional densities. A similar
parameterization is used by \cite{zheng2012targeted} to estimate
mediated effects under deterministic interventions. We use
$\cal W, \cal A, \cal Z$ and $\cal Y$ to denote the support of the
corresponding random variables.

We formalize the definition of our counterfactual variables using the
following non-parametric structural equation model (NPSEM), but note
that equivalent methods may be developed by taking the counterfactual
variables as primitives. Assume
\begin{equation}\label{eq:npsem}
  W = f_W(U_W);\ A = f_A(W, U_A);\
  Z = f_Z(W, A, U_M);\ Y = f_Y(W, A, Z U_Y).
\end{equation}
This set of equations represents a mechanistic model assumed to
generate the observed data $O$; furthermore, it encodes several
fundamental assumptions.  First, an implicit temporal ordering is
assumed --- that is, $Y$ occurs after $Z$, $A$ and $W$; $Z$ occurs
after $A$ and $W$; and $A$ occurs after $W$.  Second, each variable
(i.e., $\{W, A, Z, Y\}$) is assumed to be generated from the
corresponding deterministic function (i.e., $\{f_W, f_A, f_Z, f_Y\}$)
of the observed variables that precede it temporally, plus an
exogenous variable, denoted by $U$. Each exogenous variable is assumed
to contain all unobserved causes of the corresponding observed
variable. Independence assumptions on $U=(U_W,U_A,U_Z,U_Y)$ necessary
for identification will be clarified in
Section~\ref{sec:iden}. Furthermore, we note that we have explicitly
excluded outcome-mediator confounders which are affected by
exposure. Mediation analysis in the presence of a such variables is
notoriously hard \citep{avin2005identifiability}; the adaptation of
our methods to this problem is possible but it requires a new set of
tools which is out of the scope of this paper.

Causal effects are defined in terms of hypothetical interventions on
the NPSEM (\ref{eq:npsem}). In particular, consider an intervention in
which the equation corresponding to $A$ is removed, and the exposure
is drawn from a user-specified distribution $g_\delta(a \mid w)$,
which may depend on $g$ and is indexed by a user-specified parameter
$\delta$. We assume without loss of generality that
$g_{\delta=0}=g$. Let $A_\delta$ denote a draw from
$g_\delta(a \mid w)$. Alternatively, such modifications can sometimes
be described in terms of an intervention in which the equation
corresponding to $A$ is removed and the exposure is set equal to a
hypothetical regime $d(A, W)$. Regime $d$ depends on the natural (that
is, under no intervention) exposure level $A$ and covariates $W$. The
latter intervention is sometimes referred to as depending on the
\textit{natural value of exposure}, or as a \textit{modified
  treatment policy}
\citep{Haneuse2013}. \cite{young2014identification} provide a
discussion of the differences and similarities in the interpretation
and identification of these two interventions. Below, we discuss two
examples of stochastic interventions: modified treatment policies, and
exponential tilting.

\setcounter{example}{0}
\begin{example}[Modified treatment policy
  \citep{Haneuse2013}]\label{ex:1}
  Let $A$ denote a continuous exposure, such as operating time in
  non-small-cell lung cancer. Assume the distribution of $A$
  conditional on $W = w$ is supported in the interval $(l(w),
  u(w))$. That is, the minimum possible operating time for an
  individual with covariates $W = w$ is $l(w)$. Then one may define a
  hypothetical post-intervention exposure $A_\delta = d(A,W)$, where
  \begin{equation}\label{eq:defdshift}
    d(a, w) =
    \begin{cases}
      a - \delta & \text{if } a > l(w) + \delta \\
      a & \text{if } a \leq l(w) + \delta,
    \end{cases}
  \end{equation}
  where $0 < \delta < u(w)$ is an arbitrary user-given
  value. Interesting modifications to this regime may be obtained by
  allowing $\delta$ to be a function of $w$, therefore allowing the
  researcher to specify a different change in operating time as a
  function of covariates such as comorbidities, age, etc. This
  intervention was first introduced by \cite{Diaz12}, and has been
  further discussed in \cite{diaz2018stochastic} and
  \cite{Haneuse2013}.
\end{example}

\begin{example}[Exponential tilting]\label{ex:2}
  We can alternatively define a tilted intervention distribution as
\begin{equation}\label{eq:tilt}
  g_\delta(a \mid w) = \frac{\exp(\delta a) g(a \mid w)}
  {\int \exp(\delta a) g(a\mid w)\dd\kappa(a)},
\end{equation}
for $\delta \in \mathbb R$, and let the hypothetical post-intervention
exposure $A_\delta$ be a random draw from $g_\delta$, conditional on
the natural value of the observed covariates $W$.
For binary $A$, \cite{kennedy2018nonparametric} proposed evaluating
the total effect of a binary exposure $A$ in terms of incremental
propensity score interventions that replace the propensity score
$g(1 \mid w)$ with a shifted version based on multiplying the odds of
exposure by a user-given parameter $\delta'$. In particular, the
post-intervention propensity score is given by
\begin{equation}\label{eq:incremental}
  g_{\delta'}(1 \mid w) = \frac{\delta' g(1 \mid w)}{\delta' g(1 \mid w) + 1
  - g(1\mid w)},
\end{equation}
for $0 < \delta' < \infty$. The proposal of
\cite{kennedy2018nonparametric} is thus a case of exponential tilting
(\ref{eq:tilt}) under the parameterization $\delta' = \exp(\delta)$. This
choice of parameterization is motivated by the fact that $\delta'$ can
be interpreted as an odds ratio indicating how the intervention
changes the odds of exposure. The extremes of $\delta' = 0$
and $\delta' = \infty$ correspond to the standard interventions $A = 0$
and $A = 1$ considered in the definition of the average treatment
effect.
\end{example}

We now turn our attention to defining the \textit{population
  intervention effect (PIE)} of $A$ on $Y$. To proceed, for any values
$(a, z)$, consider the counterfactual outcome
$Y(a,z)=f_Y(W,a,z, U_Y).$, and the counterfactual mediator
$Z(a)=f_Z(W, a, U_Z)$. The counterfactual $Y(a,z)$ is the outcome in a
hypothetical world in which $(A,Z)=(a,z)$ is fixed externally.  The
PIE is defined as a contrast comparing the expectation of the outcome
under no intervention with the expectation of the counterfactual
outcome obtained under an intervention $A_\delta$:
\[\psi(\delta) = \E\{Y(A_\delta) - Y\}.\]
Note that the interpretation of the PIE depends on the stochastic
intervention considered. For example, for the modified treatment
policies of Example \ref{ex:1}, the PIE describes the difference in
outcomes obtained by a reduction of $\delta$ in operating time. In the
case of the incremental propensity score intervention
(\ref{eq:incremental}), the PIE is interpreted as the difference in
outcomes obtained by an intervention under which the odds of exposure
is $\delta'$ times higher compared to current practice.

Since $A$ is a cause of $Z$, an intervention that changes the exposure
to $A_\delta$ also induces a counterfactual mediator
$Z(A_\delta)$. 
As a consequence of the consistency implied by the NPSEM, we have
$Y(A,Z)=Y$. Similarly, the law of composition \citep{Pearl00} allows
us to write $Y(A_\delta,Z(A_\delta)) = Y(A_\delta)$.  Thus, the PIE
may be decomposed in terms of a \textit{population intervention direct
  effect (PIDE)} and a \textit{population intervention indirect effect
  (PIIE)}:
\begin{equation}\label{eqn:pie_decomp}
  \psi(\delta) = \overbrace{\E\{Y(A_\delta, Z(A_\delta)) -
    Y(A_\delta, Z)\}}^{\text{PIIE}} + \overbrace{\E\{Y(A_\delta, Z) -
    Y(A, Z)\}}^{\text{PIDE}}.
\end{equation}
This decomposition of the PIE as the sum of direct and indirect
effects has an interpretation analogous to the corresponding standard
decomposition of the average treatment effect \citep{Pearl01}.  In
particular, the direct effect represents the effect of an intervention
that changes the distribution of the exposure while keeping the
distribution of the mediators fixed at the value that it would have
taken under no intervention. The indirect effect measures the effect
of an indirect intervention on the mediators generated by intervening
on the exposure, while holding the intervention on the exposure
constant.

The intervention in Example \ref{ex:1} arises naturally as a modified
treatment policy. In contrast, the intervention in Example \ref{ex:2}
arises directly as a stochastic intervention that modifies the
distribution of the variables --- it is unclear as of yet whether this
quantity may be interpreted as a modified treatment policy.
Drawing on the work of \cite{Haneuse2013}, we make the following assumption for
modified treatment policies, which ensures that we can use the change of
variable formula when computing integrals over $\cal A$. This is useful for
studying properties of the parameter and estimators we propose.

\begin{assumptioniden}[Piecewise smooth invertibility]\label{ass:inv}
  For each $w \in \cal W$, assume that the interval
  ${\cal I}(w) = (l(w,), u(w))$ may be partitioned into subintervals
  ${\cal I}_{\delta,j}(w):j = 1, \ldots, J(w)$ such that $d(a, w)$ is equal to
  some $d_j(a, w)$ in ${\cal I}_{\delta,j}(w)$ and $d_j(\cdot,w)$ has inverse
  function $h_j(\cdot, w)$ with derivative $h_j'(\cdot, w)$.
\end{assumptioniden}
Under this assumption, the distribution of a modified treatment policy
$A_\delta=d(A,W)$ may be recovered through \citep[see][]{Haneuse2013}:
\begin{equation}\label{eq:gdelta}
  g_\delta(a \mid w) =
    \sum_{j=1}^{J(w)} I_{\delta, j} \{h_j(a, w), w\} g\{h_j(a, w)\mid w\}
    h_j'(a,w),
\end{equation}
where $I_{\delta, j} \{u, w\} = 1$ if $u \in {\cal I}_{\delta, j}(w)$
and $I_{\delta, j}\{u, w\} = 0$ otherwise. In Example~\ref{ex:1}, the
stochastic intervention becomes 
$$g_\delta(a\mid w) = g(a\mid w) \one\{l(w)\leq a \leq l(w) + \delta\} +
g(a+\delta\mid w) \one\{l(w)\leq a \leq u(w) - \delta\}.$$ Therefore,
under \ref{ass:inv}, a modified treatment policy may also be
represented as a change by which the equation $f_A$ is removed from
the NPSEM and $A$ is replaced by a draw $A_\delta$ from the
distribution $g_\delta(a\mid w)$. As a result of these two
representations, the intervention may be interpreted in two different
ways: (i) a change in the probabilistic mechanism used to assign
exposure level, and (ii) a subject-specific change in exposure from
$A$ to $A_\delta=d(A,W)$, where only interpretation (i) requires
\ref{ass:inv}. Note, however, that the population distribution of the
exposure is the same under both interventions
\citep{young2014identification}; thus, both representations lead to
exactly the same marginal counterfactual outcome distributions.

Several estimators of the functional $\psi(\delta)$ have previously
been proposed. For the case of a continuous exposure, \cite{Diaz12}
developed inverse probability weighted, outcome regression, and doubly
robust estimators based on the framework of targeted minimum
loss-based estimation (TMLE) \citep{vanderLaanRose11}, using data
adaptive estimators of the relevant nuisance
parameters. \cite{diaz2018stochastic} improved on the previous
methodology by constructing a TMLE algorithm with lower computational
complexity that preserves the desirable asymptotic properties of the
original approach.  \cite{Haneuse2013} propose estimators that rely on
correctly specified parametric models. Such methods are of limited
applicability since they are reliable only in situations where the
nuisance parameters involve only few categorical variables, where
correctly specified (that is, saturated) parametric models can
conscientiously be constructed. For the binary case with $g_\delta$ as
in Example \ref{ex:2}, \cite{kennedy2018nonparametric} proposed an
estimator for $\psi(\delta)$. This estimator is efficient,
asymptotically linear, and it allows incorporation of data adaptive
estimators of the nuisance parameters.

Since $\E(Y)$ is trivially estimated by the empirical mean in the
sample, our optimality theory and estimators focus on
$\theta(\delta) = \E\{Y(A_\delta, Z)\}$.  We present two types of
results: for general modified treatment policies satisfying
(\ref{ass:inv}), and for the particular stochastic intervention of
Example \ref{ex:2}. We compare the assumptions required for both.

\subsection{Identification}\label{sec:iden}
In this section we introduce the counterfactual variable $Y(a, z)$,
defined as the outcome that would be observed in a hypothetical world
in which $\P\{(A, Z)=(a, z)\} = 1$. This is the same counterfactual
variable that is often used to perform mediation analyses on the
average treatment effect \citep{Robins&Greenland92, Pearl01}.

We introduce the following identification assumptions:
\begin{assumptioniden}[Common support]\label{cond:cs}
Assume $\supp\{g_\delta(\, \cdot \mid w)\} \subseteq
\supp\{g(\, \cdot \mid w)\}$ for all $w\in \cal W$.
\end{assumptioniden}

\begin{assumptioniden}[Conditional exchangeability of exposure and
  mediator assignment]
\label{cond:cet}
Assume
\[\E\{Y(a, z) \mid A, W, Z\}=\E\{Y(a, z) \mid W, Z\}\text{ for all
  } (a,z)\in \cal A \times \cal Z.\]
\end{assumptioniden}

Assumption \ref{cond:cs} is standard in the analysis of causal
effects, and simply states that the $\delta$-specific intervention of
interest is supported in the data. This assumption holds for all
$\delta$ in the interventions described in Examples \ref{ex:1} and
\ref{ex:2} \citep[][]{Diaz12,kennedy2018nonparametric}. Assumption
\ref{cond:cet} is related to the assumption that
\cite{vansteelandt2012natural} used for identification of mediated
effects among the treated. In that proposal the authors assume
$Y(a, z) \indep (A, Z) \mid W$, which would imply the stronger
assumption $\E\{Y(a, z) \mid A, W, Z\}=\E\{Y(a, z) \mid W\}$. This
assumption would be satisfied for any pre-exposure variable $W$ in a
randomized experiment in which exposure and mediator are
randomized. Thus, the direct effect for a population intervention
corresponds to contrasts between treatment regimes of a randomized
experiment via interventions on $A$ and $Z$, unlike the natural direct
effect for the average treatment effect
\citep{robins2010alternative}. This claim is made rigorous in the
identification result of Theorem \ref{theo:iden} presented below. A
proof is available in the Supplementary Materials, together with the
assumptions on the NPSEM exogenous errors $U$ which are compatible
with \ref{cond:cet}.

\begin{theorem}[Identification]\label{theo:iden}
  Under \ref{cond:cs} and \ref{cond:cet}, $\theta(\delta)$ is
  identified and is given by
  \begin{equation}
    \theta(\delta) = \int m(a, z,
    w)g_\delta(a\mid w)p(z,w)\dd\nu(a,z,w).\label{eq:direct}
  \end{equation}
\end{theorem}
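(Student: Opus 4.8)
The plan is to express $\theta(\delta) = \E\{Y(A_\delta, Z)\}$ as an iterated expectation, peel off the conditioning on the counterfactual exposure and mediator using Assumption~\ref{cond:cet}, and then re-express the resulting expression in terms of observed-data functionals using consistency and Assumption~\ref{cond:cs}. First I would write, using the tower property over $(W, Z, A_\delta)$,
\[
\theta(\delta) = \E\big[\E\{Y(a, z) \mid W, Z, A_\delta\}\big|_{a = A_\delta,\, z = Z}\big],
\]
being careful that in $Y(A_\delta, Z)$ the second argument is the \emph{natural} (observed) mediator $Z$, not the counterfactual $Z(A_\delta)$, so that $Z$ is a post-exposure but observed quantity; this is precisely why the decomposition~\eqref{eqn:pie_decomp} is set up with $Y(A_\delta, Z)$ as the ``middle'' term. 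The key conceptual point is that $A_\delta$ is drawn from $g_\delta(\cdot \mid W)$ independently of everything else given $W$ (it is an external draw in the intervened NPSEM), so conditioning on $A_\delta = a$ does not alter the conditional law of $Y(a,z)$ given $(W, Z)$; this, together with $Y(a,z) \indep A \mid (W, Z)$ from~\ref{cond:cet} at the relevant $(a,z)$, gives
\[
\E\{Y(a, z) \mid W, Z, A_\delta = a\} = \E\{Y(a, z) \mid W, Z\}.
\]

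Next I would invoke consistency: on the event $\{A = a, Z = z\}$ we have $Y(a,z) = Y$, so that $\E\{Y(a, z) \mid W, Z = z\}$, evaluated where the observed data has $A = a, Z = z$, equals $\E\{Y \mid A = a, Z = z, W\} = m(a, z, w)$. Making this substitution rigorous requires Assumption~\ref{cond:cs}: since $A_\delta$ has support contained in that of $A$ given $W$, the value $a = A_\delta$ that we plug in is one for which the conditional expectation $\E\{Y(a,z)\mid W, Z=z\}$ is indeed pinned down by observed data at $(A,W,Z) = (a, w, z)$ with positive density. Assembling these pieces, $\theta(\delta) = \E\{ m(A_\delta, Z, W) \}$ where $A_\delta \sim g_\delta(\cdot \mid W)$ and $(Z, W)$ follow their observed joint law $p(z, w)$ with $A_\delta$ drawn conditionally on $W$ only; writing this expectation as an integral against the appropriate dominating measure yields exactly~\eqref{eq:direct}.

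The main obstacle I anticipate is the careful bookkeeping around the hybrid object $Y(A_\delta, Z)$: one must justify that $A_\delta$ can be taken independent of $(U_Z, U_Y, Z, Y(a,z))$ given $W$ — which follows from the definition of the stochastic intervention as removing $f_A$ and substituting an exogenous draw — and then combine this with the substantive exchangeability assumption~\ref{cond:cet} without conflating the two. A secondary subtlety is the measure-theoretic handling of plugging the random value $A_\delta$ into a conditional expectation indexed by a fixed $a$; this is where~\ref{cond:cs} does its work, and it should be stated that the manipulation is valid $g_\delta(\cdot\mid w)$-almost everywhere. I expect the write-up to lean on the NPSEM-plus-exogeneity assumptions referenced as being deferred to the Supplementary Materials, so the proof here can cite those as giving the first display above and focus on the two displayed equalities and the final change of variables.
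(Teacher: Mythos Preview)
Your approach is correct and mirrors the paper's proof for the case where $A_\delta$ is an exogenous draw from $g_\delta(\cdot\mid W)$: condition on $(A_\delta,Z,W)$, drop $A_\delta$ by its construction as an independent draw, use \ref{cond:cet} to reinsert $A=a$, apply consistency to reach $m(a,z,w)$, and then integrate using $A_\delta\indep Z\mid W$. One small reorganization: \ref{cond:cet} is what justifies the step from $\E\{Y(a,z)\mid Z,W\}$ to $\E\{Y(a,z)\mid A=a,Z,W\}$ (and thence, via consistency, to $m$); the removal of $A_\delta$ from the conditioning uses only the exogeneity of the intervened draw, so your placement of \ref{cond:cet} before the first display should shift to the consistency paragraph.

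The paper does one thing you do not: it treats the modified-treatment-policy interpretation $A_\delta=d(A,W)$ separately. There $A_\delta$ is a deterministic function of the natural exposure and is therefore \emph{not} conditionally independent of $(Z,Y(a,z))$ given $W$, so your exogeneity argument does not apply directly. For that case the paper conditions on the larger set $(A_\delta,A,Z,W)$, invokes $A\indep Y(A_\delta,Z)\mid(A_\delta,Z,W)$ (which follows from \ref{cond:cet} since $A_\delta$ is $(A,W)$-measurable), and then integrates out the natural $A$ using $r(z\mid w)=\int q(z\mid a',w)g(a'\mid w)\dd\nu(a')$. Your write-up would need either this additional argument or an explicit appeal to the equivalence of marginal counterfactual distributions under \ref{ass:inv} to cover modified treatment policies.
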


\begin{remark}[Mediator-outcome confounder not affected by exposure]\label{remark:na}
  Note that, like the natural direct effect of \cite{Pearl01}, we
  require that all confounders of the mediator-outcome relation are
  measured. This assumption is implicit in \ref{cond:cet}. To see why,
  consider the DAG in Figure~\ref{fig:dag}. Conditioning on the
  collider $Z$ opens a pathway from $A$ to $Y(a,z)$ through the
  outcome-mediator confounder $V$. If $V$ is not measured and adjusted
  for (i.e., $V\subseteq W$), then \ref{cond:cet} fails.

  \begin{figure}[!htb]
    \centering
    \begin{tikzpicture}
      \node[minimum width=0.6cm,inner sep=0.05cm] (Z) at (0,0){$Z$};
      \node[minimum width=0.6cm,inner sep=0.05cm] (V) at (2,0){$V$};
      \node[minimum width=0.6cm,inner sep=0.05cm] (A) at (-2,0){$A$};
      \node[minimum width=0.6cm,inner sep=0.05cm] (Yaz) at (4,0){$Y(a,z)$};
      \draw[->,>=stealth,black] (A) -- (Z);
      \draw[->,>=stealth,black] (V) -- (Z);
      \draw[->,>=stealth,black] (V) -- (Yaz);
    \end{tikzpicture}
    \caption{Directed acyclic sub-graph of the variables involved in
      the case of an unmeasured mediator-outcome confounder.}
    \label{fig:dag}
  \end{figure}
\end{remark}

\begin{remark}[Mediator-outcome confounded by exposure]\label{remark:inter}
  The methods presented here cannot be used if the mediator-outcome
  confounder $V$ is affected by exposure. This is due to the
  introduction of a new counterfactual variable $V(a)$. In particular,
  consider the DAG in Figure~\ref{fig:dag2}, where we have included
  only the relevant factual and counterfactual variables. In this case,
  conditioning on the collider $V$ would open a path
  $A\rightarrow V \leftarrow U_V \rightarrow V(a) \rightarrow Y(a,z)$,
  and would make \ref{cond:cet} invalid. However, conditioning on $V$
  is necessary for \ref{cond:cet} in order to close the path
  $A\rightarrow Z \leftarrow V\rightarrow U_V \rightarrow V(a)
  \rightarrow Y(a,z)$, which gets open when we condition on the
  collider $Z$. A comprehensive discussion of issues in identification
  of path effects that includes this issue as a particular problem may
  be found in
  \cite{avin2005identifiability}. \cite{vanderweele2014effect} propose
  a solution to this problem which involves a stochastic intervention
  on the mediator $Z$. We note that this is intrinsically different
  from the problem treated here, since we are interested in stochastic
  interventions on $A$ (not on $Z$) and do not address
  mediator-outcome confounders affected by exposure.

  \begin{figure}[!htb]
    \centering
    \begin{tikzpicture}
      \node[minimum width=0.6cm,inner sep=0.05cm] (UV) at (0,-1){$U_V$};
      \node[minimum width=0.6cm,inner sep=0.05cm] (V) at (-1.5,-1){$V$};
      \node[minimum width=0.6cm,inner sep=0.05cm] (Va) at (1.5,-1){$V(a)$};
      \node[minimum width=0.6cm,inner sep=0.05cm] (A) at (-2.5,-2){$A$};
      \node[minimum width=0.6cm,inner sep=0.05cm] (Z) at (-0.5,-2){$Z$};
      \node[minimum width=0.6cm,inner sep=0.05cm] (Yaz) at (2.5,-2){$Y(a,z)$};
      \draw[->,>=stealth,black] (A) -- (V);
      \draw[->,>=stealth,black] (UV) -- (V);
      \draw[->,>=stealth,black] (V) -- (Z);
      \draw[->,>=stealth,black] (UV) -- (Va);
      \draw[->,>=stealth,black] (Va) -- (Yaz);
      \draw[->,>=stealth,black] (A) -- (Z);
    \end{tikzpicture}
    \caption{Directed acyclic sub-graph of the variables involved in the
      case of an outcome-mediator confounder affected by exposure.}
    \label{fig:dag2}
  \end{figure}
\end{remark}

\section{Optimality theory for estimation of the direct effect}\label{sec:optimal}
Thus far we have discussed the decomposition of the effect of a
stochastic intervention into direct and indirect effects, and
have provided identification results under weaker assumptions in
comparison to the natural direct effect. In the sequel, we turn our
attention to a discussion of efficiency theory for the estimation of
$\theta(\delta)$ in the nonparametric model $\M$. The
\textit{efficient influence function} (EIF) is a key object in
semi-parametric estimation theory, as it characterizes the asymptotic
behavior of all regular and efficient estimators \citep{Bickel97,
  van2002part}. Knowledge of the EIF has important practical
implications. First, the EIF is often useful in constructing locally
efficient estimators. There are three common approaches for this: (i)
using the EIF as an estimating equation
\citep[e.g.,][]{vanderLaan2003}, (ii) using the EIF in a one-step bias
correction \citep[e.g.,][]{pfanzagl1982contributions}, and targeted
minimum loss-based estimation \citep{vdl2006targeted, vanderLaanRose11,
vanderLaanRose18}. Second, the EIF estimating equation often enjoys desirable
properties such as multiple robustness, which allows for some components of the
data distribution to be inconsistently estimated while preserving
consistency of the estimator. Third, the asymptotic analysis of
estimators constructed using the EIF often yields second-order bias
terms, which require slow convergence rates (e.g., $n^{-1/4}$) for the
nuisance parameters involved, thereby enabling the use of flexible
regression techniques in estimating these quantities.

In Theorem~\ref{theo:eif} we present the EIF for a general stochastic
intervention. Although the components of the EIF associated with $Y$
and $(Z,W)$ are the same, the component associated with the model for
the distribution of $A$ must be computed on a case-by-case basis, that
is, for each intervention of interest. Proofs for all results are available in the
Supplementary Materials.

\begin{theorem}[Efficient influence function]\label{theo:eif}
  Let $\eta = (g, m, e)$. The efficient influence function for
  $\theta(\delta)$ in the nonparametric model $\mathcal M$ is
  $D^Y_{\eta, \delta}(o) + D^A_{\eta, \delta}(o) +
  D^{Z, W}_{\eta, \delta}(o) - \theta(\delta)$, where
  \begin{align*}
    D^Y_{\eta, \delta}(o) &= \frac{g_{\delta}(a \mid w)}{e(a \mid z, w)}
                            \{y - m(a,z, w) \} \\
    D^{Z, W}_{\eta, \delta}(o) &= \int m(a,z, w) g_{\delta}(a \mid w)
                                 \dd\kappa(a),
  \end{align*}
and $D^A_{\eta,\delta}(o)$ is the efficient score corresponding to the
non-parametric model for $g$.
\end{theorem}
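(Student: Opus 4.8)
\emph{Proof sketch.} The plan is to obtain the efficient influence function as the canonical gradient of the identified functional in \eqref{eq:direct}, by differentiating $\theta(\delta)=\int m(a,z,w)\,g_\delta(a\mid w)\,p(z,w)\,\dd\nu(a,z,w)$ along one-dimensional regular parametric submodels $\{\P_\epsilon\}\subseteq\M$ through $\P$ at $\epsilon=0$ and reading off the Riesz representer of the resulting pathwise derivative in $L^2_0(\P)$; since $\M$ is nonparametric, this representer is the EIF. I would organize the computation around the factorization $p(o)=p(w)\,g(a\mid w)\,q(z\mid a,w)\,p(y\mid a,z,w)$ and the induced orthogonal decomposition $L^2_0(\P)=T_W\oplus T_A\oplus T_Z\oplus T_Y$, where each summand collects the square-integrable functions of the indicated arguments that are centered given the temporally preceding variables; the EIF is then the sum of its projections onto these four subspaces, and each projection is recovered by perturbing only the corresponding factor of the likelihood.

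First I would perturb $p(y\mid a,z,w)$ with a bounded score $s_Y$ satisfying $\E[s_Y\mid A,Z,W]=0$: only $m$ moves, with $\partial_\epsilon m_\epsilon(a,z,w)=\E[\{Y-m(A,Z,W)\}\,s_Y\mid A=a,Z=z,W=w]$, so $\partial_\epsilon\theta(\delta)=\int\partial_\epsilon m_\epsilon(a,z,w)\,g_\delta(a\mid w)\,p(z,w)\,\dd\nu$. The change of measure $g_\delta(a\mid w)\,p(z,w)=\{g_\delta(a\mid w)/e(a\mid z,w)\}\,p(a,z,w)$---legitimate under \ref{cond:cs}, which keeps $g_\delta/e$ finite---turns this into $\E[\{g_\delta(A\mid W)/e(A\mid Z,W)\}\{Y-m(A,Z,W)\}\,s_Y]$, identifying the $T_Y$ projection as $D^Y_{\eta,\delta}$. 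Writing $\bar m_\delta(z,w):=\int m(a,z,w)\,g_\delta(a\mid w)\,\dd\kappa(a)$, I would next perturb $p(w)$ and $q(z\mid a,w)$: in both cases only the marginal $p(z,w)$ moves and $\partial_\epsilon\theta(\delta)=\E[\bar m_\delta(Z,W)\,s]$, giving $T_W$- and $T_Z$-projections $\E[\bar m_\delta(Z,W)\mid W]-\theta(\delta)$ and $\bar m_\delta(Z,W)-\E[\bar m_\delta(Z,W)\mid A,W]$.

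Finally I would perturb $g(a\mid w)$ with a score $s_A$, noting that $\theta(\delta)$ now changes through two channels: directly, through the dependence of $g_\delta$ on $g$, and indirectly, through the marginalization $p(z,w)=\int q(z\mid a',w)\,g(a'\mid w)\,p(w)\,\dd\kappa(a')$. The indirect channel contributes $\E[\bar m_\delta(Z,W)\mid A,W]-\E[\bar m_\delta(Z,W)\mid W]$, which, added to the $T_W$- and $T_Z$-projections above, telescopes to $\bar m_\delta(Z,W)-\theta(\delta)=D^{Z,W}_{\eta,\delta}(o)-\theta(\delta)$; the direct channel is governed by the intervention-specific functional $g\mapsto g_\delta$, and its Riesz representer in $T_A$ is by construction the efficient score $D^A_{\eta,\delta}$ of the nonparametric model for $g$, to be evaluated case by case (for the modified treatment policies satisfying \ref{ass:inv}, and for Example~\ref{ex:2}, in later results). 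Summing the four projections yields $D^Y_{\eta,\delta}+D^A_{\eta,\delta}+D^{Z,W}_{\eta,\delta}-\theta(\delta)$; I would close by checking that this has mean zero and lies in $L^2(\P)$, or, equivalently and more directly, by positing this expression and verifying that $\E[D\,s]$ equals the pathwise derivative for $s$ ranging over each of $T_W,T_A,T_Z$, and $T_Y$.

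The genuinely routine parts are the interchange of differentiation and integration and the construction of submodels attaining each score---standard, using bounded scores and dominated convergence, with \ref{cond:cs} ensuring the weight $g_\delta/e$ is well defined so that the candidate gradient is square-integrable. The main obstacle is bookkeeping the dual role of $g$: one must separate the change in the $(Z,W)$ marginal induced by a perturbation of $g$ from the changes induced by perturbations of $q$ and $p(w)$, and show that the former recombines with the direct $g_\delta$-variation in exactly the way that leaves $D^{Z,W}_{\eta,\delta}$ depending on $g$ only through $g_\delta$ inside $\bar m_\delta$. This is also precisely why no closed form for $D^A_{\eta,\delta}$ can be given at this level of generality---it is the Riesz representer of the derivative of the map $g\mapsto g_\delta$, which must be computed separately for each stochastic intervention of interest.
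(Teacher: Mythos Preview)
Your argument is correct, but it proceeds along a genuinely different line from the paper. The paper's proof does not compute the pathwise derivative directly: instead it invokes the fact that in a nonparametric model the EIF coincides with the influence curve of the NPMLE, writes the plug-in NPMLE of $\theta(\delta)$ in the discrete case as a smooth functional of empirical cell probabilities $\Pn f_{y,a,z,w}$, $\Pn f_{a,z,w}$, $\Pn f_{z,w}$, and then differentiates that functional (citing Appendix~18 of \citet{vanderLaanRose11}). It first treats $g_\delta$ as known---so the NPMLE uses $\Pn(z,w)$ directly, without factoring through $g$---which yields $D^Y_{\eta,\delta}+D^{Z,W}_{\eta,\delta}-\theta(\delta)$ in one stroke; only afterwards does it bring in the estimation of $g$ to obtain the intervention-specific $D^A_{\eta,\delta}$.

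Your tangent-space route is more transparent about structure: you make explicit that the superscripts on $D^Y$, $D^A$, $D^{Z,W}$ do \emph{not} correspond to the orthogonal projections onto $T_Y$, $T_A$, $T_Z\oplus T_W$, because $\bar m_\delta(Z,W)$ has a nonzero $T_A$ component that cancels against the ``indirect'' $g$-channel, and you isolate precisely why $D^A_{\eta,\delta}$ must be left intervention-specific. The paper's NPMLE shortcut, by contrast, sidesteps the dual-role-of-$g$ bookkeeping entirely---because $\Pn(z,w)$ estimates $p(z,w)$ without going through the factorization $r=\int q\,g$, the indirect channel never appears as a separate term---at the cost of being less explicit about where each piece of the EIF lives in the tangent space. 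Both arguments are valid; yours is closer to a textbook semiparametric derivation, theirs is quicker once one accepts the NPMLE--IF identification.
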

An immediate consequence of Theorem \ref{theo:eif} is that, in a
randomized trial, we have $D^A_{\eta,\delta}(o)=0$.
Lemmas~\ref{lemma:mtp} and~\ref{lemma:tilt} below present the
$D^A_{\eta,\delta}(o)$ components for modified treatment policies
satisfying \ref{ass:inv} and for the exponential tilting of Example
\ref{ex:2}, respectively.

\begin{lemma}[Modified treatment policies]\label{lemma:mtp}
  Define the
  nuisance parameter
  \begin{align}
    \phi(a,w) &= \int m(d(a, w), z, w) r(z\mid w) d\nu(z)     \label{eq:paramr}\\
              &=\E \left\{\frac{g(A \mid W)}{e(A \mid Z, W)}
                m(d(A, W),Z, W) \mid A = a, W = w \right\},   \label{eq:paramnor} 
  \end{align} and augment $\eta$ as $\eta = (g, m, e, \phi)$. If the modified
  treatment policy $d(A,W)$ satisfies \ref{ass:inv}, then
  \[D^A_{\eta, \delta}(o) = \phi(a, w) - \int \phi(a, w) g(a \mid w)
    \dd\kappa(a).\]
\end{lemma}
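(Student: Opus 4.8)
The plan is to isolate $D^A_{\eta,\delta}$ by differentiating the functional \eqref{eq:direct} along a one-parameter submodel that perturbs only $g$, exploiting the explicit form \eqref{eq:gdelta} of $g_\delta$ (valid under \ref{ass:inv}) together with Theorem~\ref{theo:eif}, which already tells us that the efficient influence function is $D^Y_{\eta,\delta} + D^A_{\eta,\delta} + D^{Z,W}_{\eta,\delta} - \theta(\delta)$ and that $D^A_{\eta,\delta}$ lies in the tangent space $T_g = \{s(a,w) : \E[s(A,W)\mid W] = 0\}$ of the nonparametric model for $g$.

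First I would set up the submodel. Using the factorization $p(o) = p(w)\,g(a\mid w)\,q(z\mid a,w)\,p(y\mid a,z,w)$, let $g_t(a\mid w) = g(a\mid w)\{1 + t\,s(a,w)\}$ for bounded $s\in T_g$, holding $p(w)$, $q(z\mid a,w)$ and $p(y\mid a,z,w)$ (hence $m$) fixed; the submodel score is $s$. The key point is that this perturbation changes $g_\delta$ through \eqref{eq:gdelta} and simultaneously changes $r(z\mid w) = \int q(z\mid a,w)\,g(a\mid w)\,d\kappa(a)$, hence $p(z,w) = r(z\mid w)\,p(w)$. Differentiating $\theta_t(\delta) = \int m(a,z,w)\,g_{\delta,t}(a\mid w)\,p_t(z,w)\,d\nu(a,z,w)$ at $t = 0$ therefore produces two terms, one from $\partial_t g_\delta$ and one from $\partial_t p(z,w)$.

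For the term from $\partial_t g_\delta$, integrating over $z$ first produces the factor $\int m(a,z,w)\,r(z\mid w)\,d\nu(z)$; then the change of variables of \ref{ass:inv} is applied on each subinterval $\mathcal I_{\delta,j}(w)$, where the substitution $a' = h_j(a,w)$ satisfies $h_j'(a,w)\,d\kappa(a) = d\kappa(a')$ and replaces $a$ by $d_j(a',w) = d(a',w)$, so the Jacobian in \eqref{eq:gdelta} cancels and the indicators $I_{\delta,j}$ reassemble the pieces into a single integral over $\mathcal I(w)$; this yields $\E\{\phi(A,W)\,s(A,W)\}$ with $\phi$ as in \eqref{eq:paramr}. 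For the term from $\partial_t p(z,w)$, integrating over $a$ produces $D^{Z,W}_{\eta,\delta}(o) = \int m(a,z,w)\,g_\delta(a\mid w)\,d\kappa(a)$ weighted by the score, and one iterated expectation collapses it to $\E\{\bar D(A,W)\,s(A,W)\}$, where $\bar D(a,w) = \E\{D^{Z,W}_{\eta,\delta}(O)\mid A=a,W=w\}$. Thus $\frac{d}{dt}\theta_t(\delta)\big|_{t=0} = \E[\{\phi(A,W) + \bar D(A,W)\}\,s(A,W)]$ for every $s\in T_g$. On the other hand, this derivative also equals $\E[\{D^Y_{\eta,\delta} + D^A_{\eta,\delta} + D^{Z,W}_{\eta,\delta} - \theta(\delta)\}\,s(A,W)]$; here $\E[D^Y_{\eta,\delta}\,s(A,W)] = 0$ because $\E[D^Y_{\eta,\delta}\mid A,W] = 0$, $\E[\theta(\delta)\,s(A,W)] = 0$ because $\E[s(A,W)] = 0$, and $\E[D^{Z,W}_{\eta,\delta}\,s(A,W)] = \E[\bar D(A,W)\,s(A,W)]$ by iterated expectations. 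Cancelling the $\bar D$ terms gives $\E[D^A_{\eta,\delta}\,s(A,W)] = \E[\phi(A,W)\,s(A,W)]$ for all $s\in T_g$; since $D^A_{\eta,\delta}\in T_g$ and $T_g$ consists precisely of the functions of $(a,w)$ with conditional mean zero given $W$, this forces $D^A_{\eta,\delta}(o) = \phi(a,w) - \int\phi(a,w)\,g(a\mid w)\,d\kappa(a)$, the projection of $\phi$ onto $T_g$. Finally, the equivalence of \eqref{eq:paramr} and \eqref{eq:paramnor} follows from $g/e = r/q$ (the parametrization $e = gq/r$) and the fact that $Z\mid A=a,W=w$ has density $q(\cdot\mid a,w)$, so that $\E\{\tfrac{g(A\mid W)}{e(A\mid Z,W)}m(d(A,W),Z,W)\mid A=a,W=w\} = \int\tfrac{r(z\mid w)}{q(z\mid a,w)}m(d(a,w),z,w)\,q(z\mid a,w)\,d\nu(z) = \phi(a,w)$.

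The step I expect to be the main obstacle is the change-of-variables computation for $\partial_t g_\delta$: one must carefully track the partition $\{\mathcal I_{\delta,j}(w)\}_j$, the piecewise inverses $h_j(\cdot,w)$ and their derivatives, verify that $h_j'$ exactly cancels the Jacobian appearing in \eqref{eq:gdelta}, and check that the indicators $I_{\delta,j}$ match domains so that the $j$-pieces recombine into $\int_{\mathcal I(w)}\phi(a,w)\,\partial_t g(a\mid w)\,d\kappa(a)$. Everything after that is routine bookkeeping with iterated expectations.
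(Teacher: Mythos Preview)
Your proof is correct and takes a different route from the paper. The paper derives $D^A_{\eta,\delta}$ via the influence curve of the nonparametric MLE, using the delta-method formula $IC(\P)(O)=\sum_{f\in\mathcal F}\frac{d\hat\Theta^*(\P)}{d\P f}\{f(O)-\P f\}$: after obtaining $D^Y_{\eta,\delta}+D^{Z,W}_{\eta,\delta}-\theta(\delta)$ with $g_\delta$ held fixed, it identifies $D^A_{\eta,\delta}$ as the influence function of the map $\hat g\mapsto\hat g_\delta$ (with $m$ and $p(z,w)$ fixed at truth), then declares the remaining algebra ``lengthy and not particularly illuminating'' and omits it entirely. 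You instead compute the pathwise derivative along submodels perturbing only $g$ and match it against $\E[D_{\eta,\delta}\,s]$; your observation that the contribution from $\partial_t p(z,w)$ (the $\bar D$ term) appears identically on both sides and cancels is what keeps the bookkeeping clean. Both arguments ultimately rest on the same change-of-variables computation under \ref{ass:inv}, and your version makes that step---and the hypothesis $D^A_{\eta,\delta}\in T_g$---fully explicit. The paper's NPMLE construction guarantees $D^A_{\eta,\delta}\in T_g$ implicitly (the influence function of any functional of $\hat g$ alone is a centered function of $(A,W)$), so your reading of Theorem~\ref{theo:eif} is legitimate, but the paper leaves all of the change-of-variables details to the reader.
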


\begin{lemma}[Exponential tilt]\label{lemma:tilt}
  Define the nuisance parameter
  \begin{align}
    \phi(a,w) &= \int m(a, z,w) r(z\mid w) d\nu(z)      \label{eq:paramr1}    \\
              &=\E \left\{\frac{g(A \mid W)}{e(A \mid Z, W)}
                m(A, Z,W) \mid A = a, W = w \right\},     \label{eq:paramnor1}
  \end{align}
  and augment $\eta$ as $\eta = (g, m, e, \phi)$. If the stochastic
  intervention is the exponential tilt (\ref{eq:tilt}), then
  $$D^A_{\eta, \delta}(o) = \frac{g_\delta(a \mid
    w)}{g(a \mid w)}\left\{\phi(a, w) - \int
    \phi(a, w)g_\delta(a\mid w)\dd\kappa(a)\right\}.$$
\end{lemma}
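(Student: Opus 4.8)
By Theorem~\ref{theo:eif} the efficient influence function of $\theta(\delta)$ is $D^Y_{\eta,\delta}(o)+D^A_{\eta,\delta}(o)+D^{Z,W}_{\eta,\delta}(o)-\theta(\delta)$, with the displayed forms of $D^Y_{\eta,\delta}$ and $D^{Z,W}_{\eta,\delta}$ valid for every intervention; only $D^A_{\eta,\delta}$, the component in the tangent space $\mathcal{T}_g=\{h(a,w)\in L^2(\P):\E[h(A,W)\mid W]=0\}$ of the nonparametric model for $g$, has an intervention-specific form, because it is the only piece whose derivation involves the sensitivity of the post-intervention density $g_\delta$ to a perturbation of $g$ rather than the pointwise value of $g_\delta$. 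So the plan is: along the submodel $g_t(a\mid w)=g(a\mid w)\{1+t\,h(a,w)\}$, $h\in\mathcal{T}_g$ (with $p(w)$, $q(z\mid a,w)$ and the conditional law of $Y$ given $(A,Z,W)$ held fixed), compute the derivative of $g_\delta$ for the exponential tilt, feed it through the otherwise intervention-free algebra of Theorem~\ref{theo:eif}, and read off $D^A_{\eta,\delta}$.

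The crux is the derivative of the tilted intervention density. Since $g_{\delta,t}(a\mid w)=\exp(\delta a)g_t(a\mid w)/\int\exp(\delta a')g_t(a'\mid w)\dd\kappa(a')$ is itself a one-parameter exponential tilt of $g_t$, differentiating the ratio at $t=0$ gives the familiar mean-centered score
\[
\left.\frac{\partial}{\partial t}g_{\delta,t}(a\mid w)\right|_{t=0}
= g_\delta(a\mid w)\Big\{h(a,w)-\int h(a',w)\,g_\delta(a'\mid w)\,\dd\kappa(a')\Big\}.
\]
Plugging this into the relevant piece of $\partial_t\theta(\delta)|_{t=0}$, namely $\int m(a,z,w)\,\partial_t g_{\delta,t}(a\mid w)|_{t=0}\,p(z,w)\,\dd\nu$, and integrating out $z$ via $\int m(a,z,w)p(z,w)\dd\nu(z)=p(w)\,\phi(a,w)$ with $\phi$ as in~(\ref{eq:paramr1}), this piece becomes $\int\phi(a,w)\,g_\delta(a\mid w)\big\{h(a,w)-\int h(a',w)g_\delta(a'\mid w)\dd\kappa(a')\big\}\,p(w)\,\dd\nu(a,w)$. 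Re-expressing it as an expectation under the law of $(A,W)$, whose density is $g(a\mid w)p(w)$, produces the weight $g_\delta(a\mid w)/g(a\mid w)$, and collecting terms yields $\E[h(A,W)\,D^A_{\eta,\delta}(A,W)]$ with $D^A_{\eta,\delta}(a,w)=\frac{g_\delta(a\mid w)}{g(a\mid w)}\big\{\phi(a,w)-\int\phi(a',w)g_\delta(a'\mid w)\dd\kappa(a')\big\}$. Finally $\E[D^A_{\eta,\delta}(A,W)\mid W=w]=\int g_\delta(a\mid w)\big\{\phi(a,w)-\int\phi g_\delta\big\}\dd\kappa(a)=0$, so $D^A_{\eta,\delta}\in\mathcal{T}_g$ and is the sought representer. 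A one-line side computation verifies the equality of the two expressions~(\ref{eq:paramr1}) and~(\ref{eq:paramnor1}) for $\phi$, using $g(A\mid W)/e(A\mid Z,W)=r(Z\mid W)/q(Z\mid A,W)$ and conditioning on $(A,W)$.

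The step I expect to demand the most care is not the computation above but the justification that $D^A_{\eta,\delta}$ really is captured by this single channel --- i.e., that the change a perturbation of $g$ induces in the mediator--covariate marginal $p(z,w)$ is fully absorbed into $D^{Z,W}_{\eta,\delta}$ and does not mix with the intervention-specific part. When Theorem~\ref{theo:eif} is taken as given, this is immediate; if one instead re-derives the gradient from scratch, one must expand a generic nuisance score into its $\mathcal{T}_W$, $\mathcal{T}_{A\mid W}$, $\mathcal{T}_{Z\mid A,W}$ and $\mathcal{T}_{Y\mid A,Z,W}$ components and check that the conditional-mean terms $\E\{D^{Z,W}_{\eta,\delta}(o)\mid A,W\}$ and $\E\{D^{Z,W}_{\eta,\delta}(o)\mid W\}$ cancel across the exposure and mediator tangent pieces, leaving precisely $D^{Z,W}_{\eta,\delta}(o)-\theta(\delta)$ plus the $D^A_{\eta,\delta}$ above. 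Everything else --- the exponential-family differentiation, the $z$-integration, and the change of dominating measure --- is routine.
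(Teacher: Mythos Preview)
Your argument is correct. The paper takes a different but equivalent route: rather than working with parametric submodels and scores, it identifies $D^A_{\eta,\delta}$ as the influence curve of the plug-in $\sum_{a,z,w} m(a,z,w)\,\hat g_\delta(a\mid w)\,p(z,w)$ in which only $g_\delta$ is replaced by its nonparametric MLE (with $m$ and $p(z,w)$ held at truth), invoking the fact that in a nonparametric model the efficient influence function coincides with the influence curve of the NPMLE. The paper then declares the ensuing algebra ``lengthy and not particularly illuminating'' and omits it entirely. Your score-based derivation supplies precisely that omitted algebra, and does so more transparently: the key step---differentiating the tilted density $g_{\delta,t}$ at $t=0$ to obtain $g_\delta\{h-\int h\,g_\delta\,\dd\kappa\}$---is cleaner in the submodel formalism than it would be via the functional delta method on cell probabilities. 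Your closing caveat about the $p(z,w)$ channel is on point and matches how the paper's decomposition actually works: $D^A$ in Theorem~\ref{theo:eif} is by construction the contribution from the $g_\delta$ argument alone (the paper estimates $p(z,w)$ by the empirical distribution of $(Z,W)$, not via $\hat g$ and $\hat q$), so the $g$-induced variation in the marginal of $(Z,W)$ is indeed accounted for separately by $D^{Z,W}_{\eta,\delta}$, and taking Theorem~\ref{theo:eif} as given is legitimate.
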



Expressions (\ref{eq:paramr}) and (\ref{eq:paramr1}) show that
estimators based on the respective influence functions require
integration with respect to the mediator $Z$, as well as estimation of
the possibly multivariate conditional density $r(z\mid w)$, which may
pose an estimation challenge due to the curse of dimensionality. To
solve the issue, we propose an alternative parametrization
(\ref{eq:paramnor}) and (\ref{eq:paramnor1}) of the EIF based on a
sequential regression $\phi$, rather than using the density of $Z$
conditional on $(A, W)$ and $W$. This choice has important
consequences for the purpose of estimation, as it helps to bypass
estimation of the (possibly high-dimensional) conditional density of
the mediators, instead allowing for regression methods, which are far
more commonly found in the statistics literature and software, to be
used for estimation of the relevant quantity. In particular, if
$r(z\mid w)$ is hard to estimate, estimators of $\phi$ may be computed
by first estimating $g$, $m$, and $e$, computing the pseudo-outcomes
defined in the lemmas, and applying regression techniques to estimate
the outer conditional expectation.

For binary exposures, the EIF corresponding to the incremental
propensity score intervention may be simplified as in the following
corollary.

\begin{coro}[Efficient influence function for incremental propensity
score interventions]\label{coro:tilt}
Let $A$ take values on $\{0, 1\}$, and let the exponentially tilted
intervention $g_{\delta,0}(1\mid W)$ be as in
(\ref{eq:incremental}). Then, the EIF of Lemma~\ref{lemma:tilt} may be
simplified as follows. Define the nuisance parameter
  $$\phi(w) = \E \left\{m(1, Z, W) - m(0, Z, W) \mid W = w \right\},$$
  and let $\eta = (g,m,e,\phi)$. Then
  $$D^A_{\eta,\delta}(o)= \frac{\delta\phi(w)\{a - g(1\mid w)\}}{\{\delta
      g(1\mid w) + 1 - g(1\mid w)\}^2}.$$
\end{coro}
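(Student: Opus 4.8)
The plan is to specialize Lemma~\ref{lemma:tilt} to a binary exposure and carry out the algebra dictated by the incremental propensity score parametrization~(\ref{eq:incremental}). Throughout, write $g_1 = g(1\mid w)$, $g_{\delta,1} = g_\delta(1\mid w)$, and $D = \delta g_1 + 1 - g_1$, so that~(\ref{eq:incremental}) reads $g_{\delta,1} = \delta g_1/D$ and, crucially, $1 - g_{\delta,1} = (1-g_1)/D$. Since $A$ is binary and $\kappa$ is the counting measure, the nuisance parameter from Lemma~\ref{lemma:tilt} satisfies $\phi(a,w) = \int m(a,z,w) r(z\mid w)\,\dd\nu(z) = \E\{m(a,Z,W)\mid W = w\}$, so that $\phi(1,w) - \phi(0,w) = \phi(w)$, the function appearing in the statement of the corollary.

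Next I would evaluate the two ingredients of the $D^A_{\eta,\delta}$ formula in Lemma~\ref{lemma:tilt}. The centering integral is a two-point sum,
\[
\int \phi(a,w)\,g_\delta(a\mid w)\,\dd\kappa(a) = \phi(0,w)(1-g_{\delta,1}) + \phi(1,w)\,g_{\delta,1} = \phi(0,w) + g_{\delta,1}\,\phi(w),
\]
so the bracketed term $\phi(a,w) - \int \phi\,g_\delta\,\dd\kappa$ equals $\phi(w)(1-g_{\delta,1})$ when $a=1$ and $-g_{\delta,1}\,\phi(w)$ when $a=0$. The remaining weight $g_\delta(a\mid w)/g(a\mid w)$ equals $g_{\delta,1}/g_1 = \delta/D$ when $a=1$ and $(1-g_{\delta,1})/(1-g_1) = 1/D$ when $a=0$. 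Multiplying the two factors and using $g_{\delta,1} = \delta g_1/D$ and $1-g_{\delta,1} = (1-g_1)/D$ yields $\delta(1-g_1)\phi(w)/D^2$ for $a=1$ and $-\delta g_1\phi(w)/D^2$ for $a=0$; both are subsumed by the single expression $\delta\,\phi(w)\{a - g_1\}/D^2$, which is the asserted formula.

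There is no substantive obstacle here: the corollary follows by direct substitution into Lemma~\ref{lemma:tilt}. The only point requiring a little care is the case bookkeeping --- one must treat $a\in\{0,1\}$ separately, observe that the factor $1 - g_{\delta,1}$ collapses to $(1-g_1)/D$, and then recognize that the two case-specific expressions agree with one expression affine in $a$. As a consistency check, one verifies that the conditional mean of $D^A_{\eta,\delta}$ given $W=w$ under $g(\cdot\mid w)$ vanishes, namely $g_1\cdot\delta(1-g_1)\phi(w)/D^2 + (1-g_1)\cdot\{-\delta g_1\phi(w)/D^2\} = 0$, as it must since $D^A_{\eta,\delta}$ lies in the tangent space of the nonparametric model for $g(\cdot\mid w)$.
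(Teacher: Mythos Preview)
Your proposal is correct and follows essentially the same approach as the paper: both specialize Lemma~\ref{lemma:tilt} to binary $A$, recognize $\phi(a,w)=\E\{m(a,Z,W)\mid W=w\}$, and then carry out the two-point algebra to collapse the weight and bracket into $\delta\phi(w)\{a-g(1\mid w)\}/D^2$. The only cosmetic difference is that the paper organizes the computation via a dummy sum over $t\in\{0,1\}$ and the identity $\pi_\delta\{t-g_\delta(t\mid w)\}=(2t-1)\pi_\delta(1-\pi_\delta)$, whereas you evaluate the two cases $a=0,1$ directly; the content is identical.
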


Note that in Lemmas~\ref{lemma:mtp}, \ref{lemma:tilt}, and Corollary
\ref{coro:tilt}, we have used $\phi$ to represent different
parameters. We have allowed this abuse of notation because the nature
of this auxiliary parameter is the same for all three cases, and
having one symbol will allow us to state our estimation results in
some generality. In the sequel, the difference will always be clear
from context. Note also that $g(a\mid w)$ could be pulled out of the
expectation in the definition of $\phi(a,w)$. We decided to leave it
inside the expectation as we conjecture that it may act as a
stabilizing factor for the inverse probability weights
$\{e(a\mid z,w)\}^{-1}$.

In contrast to the efficient influence function for the natural direct
effect \citep{tchetgen2012semiparametric}, the contribution of the
exposure process to the EIF for the PIE mediated effect is
non-zero. This is a direct consequence of the fact that the parameter
of interest depends on $g$; moreover, this implies that, unlike the
natural direct effect, the efficiency bound in observational studies
differs from the efficiency bound in randomized studies. As we see in
the lemmas below, this also implies that it is not generally possible
to obtain estimating equations that are robust to inconsistent
estimation of $g$. However, such robustness will be possible if the
stochastic intervention is also a modified treatment policy satisfying
\ref{ass:inv}.

\begin{lemma}[Multiple robustness for modified treatment
  policies]\label{lemma:dr1}
  Let the modified treatment policy satisfy \ref{ass:inv}, and let
  $\eta_1=(g_1, e_1, m_1,\phi_1)$ be such that one of the two
  following conditions hold:
  \begin{enumerate}[label=(\roman*)]
  \item $g_1=g$ and either $e_1=e$ or $m_1=m$,
  \item $m_1=m$ and $\phi_1=\phi$.
  \end{enumerate}
  Then $\P D_{\eta_1,\delta}=\theta(\delta)$, with
  $D_{\eta,\delta}$ as defined in Theorem~\ref{theo:eif} and
  Lemma~\ref{lemma:mtp}.
\end{lemma}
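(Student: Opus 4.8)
The strategy is to evaluate $\P D_{\eta_1,\delta}$ in closed form and read off $\theta(\delta)$ under each scenario. Throughout, write $D_{\eta,\delta}=D^Y_{\eta,\delta}+D^A_{\eta,\delta}+D^{Z,W}_{\eta,\delta}$ in the notation of Theorem~\ref{theo:eif} and Lemma~\ref{lemma:mtp} (so that the efficient influence function is $D_{\eta,\delta}-\theta(\delta)$), and recall that a candidate $\eta=(g,e,m,\phi)$ determines the post-intervention density $g_\delta$ through \eqref{eq:gdelta}; write $g_{1,\delta}$ for the density induced by $g_1$. Let $p_W$, $r(z\mid w)$, $e(a\mid z,w)$ denote the true marginal of $W$ and the true conditional densities, and use the factorization $p(a,z,w)=e(a\mid z,w)\,r(z\mid w)\,p_W(w)$ of the true joint density of $(A,Z,W)$. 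The argument rests on three ingredients: (a) iterating expectations over $(A,Z,W)$ in $D^Y$ to replace $Y$ by the true regression $m$; (b) that same factorization, used to rewrite every $\P(\cdot)$ as an integral against known densities so that targeted factors cancel; and (c) the change-of-variables identity implied by \ref{ass:inv} together with \eqref{eq:gdelta}, namely $\int\psi(a,w)\,g_\delta(a\mid w)\,\dd\kappa(a)=\int\psi(d(a,w),w)\,g(a\mid w)\,\dd\kappa(a)$ for every integrable $\psi(\cdot,w)$, and likewise with $(g,g_\delta)$ replaced by $(g_1,g_{1,\delta})$.

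Carrying out (a) and (b) gives three reductions. Conditioning on $(A,Z,W)$,
\[
\P D^Y_{\eta_1,\delta}=\E\!\left[\frac{g_{1,\delta}(A\mid W)}{e_1(A\mid Z,W)}\{m(A,Z,W)-m_1(A,Z,W)\}\right]=\int\frac{g_{1,\delta}(a\mid w)}{e_1(a\mid z,w)}\{m-m_1\}(a,z,w)\,e(a\mid z,w)\,r(z\mid w)\,p_W(w)\,\dd\nu(a,z,w).
\]
Conditioning on $W$ gives $\P D^A_{\eta_1,\delta}=\int\phi_1(a,w)\{g(a\mid w)-g_1(a\mid w)\}\,p_W(w)\,\dd\kappa(a)\,\dd\nu(w)$, and integrating the mediator against $r(\cdot\mid w)$ gives $\P D^{Z,W}_{\eta_1,\delta}=\int\bar m_1(a,w)\,g_{1,\delta}(a\mid w)\,p_W(w)\,\dd\kappa(a)\,\dd\nu(w)$ with $\bar m_1(a,w)=\int m_1(a,z,w)\,r(z\mid w)\,\dd\nu(z)$. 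Two facts then close the argument: first, $\int\bar m(a,w)\,g_\delta(a\mid w)\,p_W(w)\,\dd\kappa(a)\,\dd\nu(w)=\int m(a,z,w)\,r(z\mid w)\,g_\delta(a\mid w)\,p_W(w)\,\dd\nu(a,z,w)=\theta(\delta)$ by \eqref{eq:direct}; second, applying (c) to the true densities and using the identity $\phi(a,w)=\bar m(d(a,w),w)$ from \eqref{eq:paramr} yields $\E\{\phi(A,W)\}=\int\bar m(d(a,w),w)\,g(a\mid w)\,p_W(w)\,\dd\kappa(a)\,\dd\nu(w)=\int\bar m(a,w)\,g_\delta(a\mid w)\,p_W(w)\,\dd\kappa(a)\,\dd\nu(w)=\theta(\delta)$.

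In case~(i), $g_1=g$, so $g_{1,\delta}=g_\delta$ and $\P D^A_{\eta_1,\delta}=0$. Summing the $D^Y$ and $D^{Z,W}$ reductions: if $e_1=e$ the factor $e(a\mid z,w)$ cancels $e_1$ and the $m_1$-terms cancel, while if $m_1=m$ the $D^Y$ reduction vanishes and $\bar m_1=\bar m$; in either subcase the sum collapses to $\int m(a,z,w)\,r(z\mid w)\,g_\delta(a\mid w)\,p_W(w)\,\dd\nu(a,z,w)=\theta(\delta)$. In case~(ii), $m_1=m$ forces $\P D^Y_{\eta_1,\delta}=0$ regardless of $e_1$ and $g_1$; applying (c) to $\P D^{Z,W}_{\eta_1,\delta}$ with the pair $(g_1,g_{1,\delta})$ and using $\phi_1=\phi=\bar m\circ d$ gives $\P D^{Z,W}_{\eta_1,\delta}=\int\phi(a,w)\,g_1(a\mid w)\,p_W(w)\,\dd\kappa(a)\,\dd\nu(w)$, which cancels the $g_1$-term of $\P D^A_{\eta_1,\delta}$, leaving $\P D_{\eta_1,\delta}=\int\phi(a,w)\,g(a\mid w)\,p_W(w)\,\dd\kappa(a)\,\dd\nu(w)=\E\{\phi(A,W)\}=\theta(\delta)$.

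The step that demands the most care is the change of variables (c): under \ref{ass:inv} one splits each $a$-integral over the pieces $\mathcal I_{\delta,j}(w)$ of the partition, substitutes $u=h_j(a,w)$ with $h_j'(a,w)\,\dd\kappa(a)=\dd\kappa(u)$ on each piece, and checks that the images reassemble the support of $g(\cdot\mid w)$ (resp.\ of $g_1(\cdot\mid w)$). This is the only place \ref{ass:inv} is invoked and the only place a misspecified $g_1$ enters non-trivially; the remainder is bookkeeping with iterated expectations and the density factorization.
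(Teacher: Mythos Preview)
Your argument is correct. The three reductions, the change-of-variables identity under \ref{ass:inv}, and the case split all go through as written; in particular, applying the change of variables with the misspecified pair $(g_1,g_{1,\delta})$ in case~(ii) is legitimate because \ref{ass:inv} is a property of the map $d$ and formula~\eqref{eq:gdelta} holds for any input density.

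The paper takes a slightly different route: rather than verifying $\P D_{\eta_1,\delta}=\theta(\delta)$ case by case, it derives a single second-order expansion (Theorem~\ref{theo:dr1} in the supplement) of the form
\[
\P D_{\eta_1,\delta}-\theta(\delta)=\int g_\delta\!\left(\tfrac{e}{e_1}-1\right)(m-m_1)r\,\dd\kappa\,\dd\xi
+\int\tfrac{e}{e_1}(g_{1,\delta}-g_\delta)(m-m_1)r\,\dd\kappa\,\dd\xi
+\text{(term in }(g-g_1)\text{ and }(\phi_1-\phi)\text{)},
\]
from which both cases of the lemma drop out by inspection. The ingredients are the same as yours---iterated expectation, the factorization $p=e\,r\,p_W$, and the change of variables---but the paper keeps all the cross-terms explicit. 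The payoff of that organization is that the same expansion drives the asymptotic linearity result (Theorem~\ref{ass:lin1}): the rate condition $\norm{\hat m-m}\{\norm{\hat g-g}+\norm{\hat e-e}\}+\norm{\hat g-g}\,\norm{\hat\phi-\phi}=o_\P(n^{-1/2})$ is read off directly from the product structure of the remainder. Your direct verification is cleaner for the robustness statement alone but does not by itself expose that second-order structure.
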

The above lemma implies that it is possible to construct consistent
estimators for $\theta$ under consistent estimation of at least two of
the nuisance parameters in $\eta$, in the configurations described in
the lemma. This lemma is a direct consequence of Theorem
\ref{theo:dr1}, found in the Supplementary Materials. We note,
however, that part (ii) of the lemma may be uninteresting if the
parameterization (\ref{eq:paramnor}) is used to estimate $\phi$. In
that case $\phi_1=\phi$ will generally require $g_1=g$, $e_1=e$, and
$m_1=m$, as well as consistency of the estimator for the outer
expectation. In contrast, if the parameterization (\ref{eq:paramr}) is
used to estimate $\phi$, then the case $m_1=m$ and $\phi_1=\phi$ would
be trivially satisfied if $m_1=m$ and $r_1=r$, where $r_1$ is the
density used to compute $\phi_1$. To some readers it may seem
surprising that estimation of $\theta(\delta)$ may be robust to
estimation of $g$, even when the parameter definition in
(\ref{eq:direct}) is explicitly dependent on $g$. We offer some
intuition into this surprising result by noting that assumption
\ref{ass:inv} allows us to use the change of variable formula to
obtain
\[\theta(\delta) = \E\left\{\int m(d(A,W),z,W)r(z \mid, W)\dd\nu(z)\right\}.\]
Estimation of this parameter without relying on $g$ may be carried out
by consistently estimating $m$, $r$, and using the empirical
distribution as an estimator of the outer expectation. This behavior
has been previously observed for the total effect $\psi(\delta)$ under
\ref{ass:inv} \citep{Diaz12,Haneuse2013}.

The robustness properties of the EIF for an exponential tilt are
presented below.

\begin{lemma}[Robustness for exponential tilting]\label{lemma:dr2}
  Let $g_\delta$ be defined as in (\ref{eq:tilt}). Let
  $\eta_1 = (g_1, e_1, m_1, \phi_1)$ be such that $g_1 = g$ and either $e_1 = e$ or
  $m_1 = m$. Then $\P D_{\eta_1, \delta} = \theta(\delta)$, with
  $D_{\eta, \delta}$ as defined in Theorem~\ref{theo:eif} and
  Lemma~\ref{lemma:tilt}.
\end{lemma}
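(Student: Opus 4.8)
The plan is to evaluate $\P D_{\eta_1,\delta}$ componentwise, where $D_{\eta,\delta} = D^Y_{\eta,\delta} + D^A_{\eta,\delta} + D^{Z,W}_{\eta,\delta}$ is the sum of the terms from Theorem~\ref{theo:eif} and Lemma~\ref{lemma:tilt} (equivalently, the EIF shifted by $+\theta(\delta)$). The one identity I would record first is that, for any integrable $f$,
\[
\E\!\left\{\frac{g_\delta(A\mid W)}{e(A\mid Z,W)}\,f(A,Z,W)\right\}
= \int g_\delta(a\mid w)\,f(a,z,w)\,p(z,w)\,\dd\nu(a,z,w),
\]
which is immediate from $e(a\mid z,w) = p(a\mid z,w)$, since then $p(a,z,w)/e(a\mid z,w) = p(z,w)$; intuitively, reweighting by $g_\delta/e$ replaces the conditional law of $A$ given $W$ by $g_\delta$ and decouples $A$ from $Z$ given $W$, while leaving the marginal law of $(Z,W)$ untouched. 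Together with the identification formula $\theta(\delta) = \int m(a,z,w)\,g_\delta(a\mid w)\,p(z,w)\,\dd\nu(a,z,w)$ of Theorem~\ref{theo:iden}, this does most of the work.

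I would first dispatch the exposure-score term. Because $g_1 = g$ in both branches of the hypothesis, $D^A_{\eta_1,\delta}$ is evaluated at the true $g$; conditioning on $W$ gives $\E\{g_\delta(A\mid W)/g(A\mid W)\mid W\} = \int g_\delta(a\mid W)\,\dd\kappa(a) = 1$ and, more generally, $\E\{(g_\delta(A\mid W)/g(A\mid W))\,\phi_1(A,W)\mid W\} = \int \phi_1(a,W)\,g_\delta(a\mid W)\,\dd\kappa(a)$, which is exactly the centering term subtracted inside $D^A_{\eta_1,\delta}$ in Lemma~\ref{lemma:tilt}. Hence $\P D^A_{\eta_1,\delta} = 0$, irrespective of $\phi_1$, $e_1$, and $m_1$. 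It therefore remains to show $\P D^Y_{\eta_1,\delta} + \P D^{Z,W}_{\eta_1,\delta} = \theta(\delta)$.

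For these two terms I would split into the two stated configurations. Iterated expectation over $(A,Z,W)$, using $m = \E(Y\mid A,Z,W)$, gives $\P D^Y_{\eta_1,\delta} = \E\{(g_\delta(A\mid W)/e_1(A\mid Z,W))(m - m_1)(A,Z,W)\}$. If $m_1 = m$, this is $0$ for any $e_1$, and $\P D^{Z,W}_{\eta_1,\delta} = \int m(a,z,w)\,g_\delta(a\mid w)\,p(z,w)\,\dd\nu = \theta(\delta)$ by Theorem~\ref{theo:iden}. If instead $e_1 = e$, the workhorse identity yields $\P D^Y_{\eta_1,\delta} = \int g_\delta(a\mid w)\,(m - m_1)(a,z,w)\,p(z,w)\,\dd\nu$, whereas $\P D^{Z,W}_{\eta_1,\delta} = \int m_1(a,z,w)\,g_\delta(a\mid w)\,p(z,w)\,\dd\nu$ directly; the $m_1$ pieces cancel and the sum equals $\int m\,g_\delta\,p(z,w)\,\dd\nu = \theta(\delta)$. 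Adding $\P D^A_{\eta_1,\delta} = 0$ completes both cases.

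The step I expect to need the most care is the bookkeeping in the $e_1 = e$ branch: one must verify that the $m_1$ term produced in $\P D^Y_{\eta_1,\delta}$ through the identity (which relies on the \emph{true} $e$) is integrated against exactly the measure $g_\delta(a\mid w)\,p(z,w)\,\dd\nu(a,z,w)$ that appears in $\P D^{Z,W}_{\eta_1,\delta}$, so that the cancellation is exact. This uses only $e = p(a\mid z,w)$ and $g_1 = g$; in particular Assumption~\ref{ass:inv} plays no role, consistent with the fact that, unlike Lemma~\ref{lemma:dr1}, no robustness to misspecification of $g$ is being claimed.
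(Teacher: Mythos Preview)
Your proof is correct. The argument that $\P D^A_{\eta_1,\delta}=0$ whenever $g_1=g$ (regardless of $\phi_1$) is clean and exactly right, and the two-case analysis of $\P D^Y_{\eta_1,\delta}+\P D^{Z,W}_{\eta_1,\delta}$ via the identity $p(a,z,w)/e(a\mid z,w)=p(z,w)$ is valid; the cancellation in the $e_1=e$ branch goes through because both pieces are integrated against the same measure $g_\delta(a\mid w)\,p(z,w)\,\dd\nu(a,z,w)$, as you note.

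Your route differs from the paper's. The paper does not prove Lemma~\ref{lemma:dr2} directly but deduces it from Theorem~\ref{theo:dr2}, a general second-order expansion of $\P D_{\eta_1,\delta}-\theta(\delta)$ valid for \emph{arbitrary} $\eta_1$. That expansion expresses the bias as a sum of products of the form $(e/e_1-1)(m-m_1)$, $(g_{1,\delta}-g_\delta)(m-m_1)$, $(c_1-c)^2$, $(c_1-c)(g-g_1)$, etc.; specializing $g_1=g$ kills every term except $\int g_\delta(e/e_1-1)(m-m_1)r\,\dd\kappa\,\dd\xi$, which then vanishes under either $e_1=e$ or $m_1=m$. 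Your argument is more elementary and self-contained for the lemma at hand, since it never needs the full expansion. The paper's detour through Theorem~\ref{theo:dr2} buys something else, however: the same expansion is what drives the rate conditions in Theorem~\ref{ass:lin2}, where one needs not just that the bias vanishes at the truth but that it is $o_\P(n^{-1/2})$ under $n^{1/4}$-consistent nuisance estimators. So the two approaches trade directness for reusability.
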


Lemma~\ref{lemma:dr2} is a direct consequence of Theorem \ref{theo:dr2}
in the Supplementary Materials. The corresponding proof reveals that the EIF for
the binary distribution is not multiply robust --- that is, the intervention
fails to satisfy assumption \ref{ass:inv} and integrals over the range of $A$
cannot be computed using change of variable formula. This behavior has been
previously observed for other interventions that do not satisfy \ref{ass:inv}
\citep{diaz2013assessing}. Even though this lemma implies that consistent
estimation of $g$ is required, the bias terms are still second-order, so an
estimator of $g$ converging at rate $n^{1/4}$ or faster is sufficient, as we
will see in the sequel.
\section{Estimation and statistical inference}\label{sec:estima}
We start this section describing two simple estimators, the substitution and
re-weighted estimators. These estimators are motivated by the fact that
$\theta(\delta)$ has the two following alternative representations:
\begin{eqnarray}
  \theta(\delta) &=& \E\left\{\int m(a, Z, W)
      g_\delta(a \mid W)\dd\kappa(a) \right\}\label{eq:gcomp}\\
    &=& \E \left\{\frac{g_\delta(A \mid W)}
      {e(A \mid Z, W)}\, Y \right\},\label{eq:ipw}
\end{eqnarray}
where we remind the reader that $e(a \mid z, w)$ denotes the
probability density function of $A$ conditional on $(Z, W)$. Equation
(\ref{eq:ipw}) follows from noting that $g q/r = e$.  This
parameterization has the advantage that only the univariate
conditional density $e(a\mid z, w)$ has to be estimated, instead of
the conditional densities of the possibly high-dimensional mediator
$Z$. A similar result was also used by \cite{zheng2012targeted} to
develop a targeted minimum loss-based estimator of natural direct
effects under a binary exposure variable.

The substitution estimator is simply defined by plugging in estimators
of $m$ and $g_\delta$ into the identification result given in
(\ref{eq:gcomp}).  Consistency of this estimator requires consistent
estimation of the outcome regression $m$ and the intervention
distribution $g_\delta$. The second estimator is a re-weighting
estimator based on the alternative representation of the
identification result given in (\ref{eq:ipw}), which requires
consistent estimation of $g_\delta$ and $e$. In the remainder of this
section, we discuss an efficient estimator that combines ideas from the
previous two estimators as well as the efficient influence function
derived in the previous section, in order to build an estimator that
is both efficient and robust to model misspecification. We discuss an
asymptotic linearity result for the doubly robust estimator that
allows computation of asymptotically correct confidence intervals and
hypothesis tests.

In the sequel, we assume that preliminary estimators $\hat{m}$,
$\hat{g_\delta}$, $\hat\phi$ and $\hat{e}$ of $m$, $g_\delta$, $\phi$,
and $e$, respectively, are available. These estimators may be obtained
from flexible regression techniques such as support vector machines,
regression trees, boosting, neural networks, splines, or ensembles thereof
\citep{Breiman1996, vanderLaan&Polley&Hubbard07}. As previously discussed, the
consistency of these estimators will determine the consistency of our
estimators of the population mediation intervention mean $\theta$.

\subsection{Substitution estimator and re-weighted estimators}
First, we discuss a substitution estimator based on (\ref{eq:gcomp}), computed
as
\begin{equation}\label{eq:est_sub}
\thetasub = \int \frac{1}{n} \sum_{i = 1}^n \hm(a, Z_i, W_i) \hgd(a \mid W_i)
  \dd\kappa(a),
\end{equation}
where we have substituted estimators of $m$ and $g_\delta$ in
(\ref{eq:gcomp}), and have estimated the expectation with respect to the joint
density $p(z, w)$ by the empirical mean. The re-weighted estimator is based on
(\ref{eq:ipw}), and is defined by
\begin{equation}\label{eq:est_re}
\thetare = \frac{1}{n} \sum_{i = 1}^n\frac{\hat g_{\delta}(A_i \mid W_i)}
{\hat{e}(A_i \mid, Z_i, W_i)} Y_i
\end{equation}

If $\hm$, $\hgd$, and $\hat{e}$ are estimated within parametric
models, then, by the delta method, both $\thetasub$ and $\thetare$ are
asymptotically linear and $n^{1/2}$-consistent. The bootstrap or an
influence function-based estimator may be used to construct
asymptotically correct confidence intervals. However, if either the
mediators or confounders are high-dimensional, the required
consistency of $\hm$, $\hgd$, and $\hat{e}$ will hardly be achievable
within parametric models. This issue may be alleviated through the use
of data adaptive estimators. Unfortunately, $n^{1/2}$-consistency of
$\thetasub$ and $\thetare$ will generally require that $\hm$, $\hgd$,
and $\hat{e}$ are consistent in $L_2(\P)$-norm at parametric rate,
which is generally not possible within data adaptive estimation of
high-dimensional regressions. Thus, the asymptotic distribution will
generally be unknown, rendering the construction of confidence
intervals and hypothesis tests impossible. In the following section,
we use the efficient influence function to propose an estimator that
is $n^{1/2}$-consistent and efficient under a weaker assumption,
requiring only $n^{1/2}$-convergence of second-order regression bias
terms.

\subsection{Efficient estimator}

We propose using the efficient influence function $D_{\eta, \delta}$
to construct a robust and efficient estimator, constructed as the
solution to the estimating equation $\Pn D_{\hat \eta, \delta} = 0$ in
$\theta$, for a preliminary estimator $\hat \eta$ of $\eta$. In order to avoid imposing entropy conditions on the initial
estimators, we advocate for the use of cross-fitting
\citep{zheng2011cross, chernozhukov2016double} in the estimation
procedure.  Let ${\cal V}_1, \ldots, {\cal V}_J$ denote a random
partition of the index set $\{1, \ldots, n\}$ into $J$ prediction sets
of approximately the same size. That is,
${\cal V}_j\subset \{1, \ldots, n\}$;
$\bigcup_{j=1}^J {\cal V}_j = \{1, \ldots, n\}$; and
${\cal V}_j\cap {\cal V}_{j'} = \emptyset$. In addition, for each $j$,
the associated training sample is given by
${\cal T}_j = \{1, \ldots, n\} \setminus {\cal V}_j$. Denote by
$\hat \eta_{j}$ the estimator of $\eta = (g, m, e, \phi)$, obtained by
training the corresponding prediction algorithm using only data in the
sample ${\cal T}_j$. Further, let $j(i)$ denote the index of the
validation set which contains observation $i$. The estimator is thus
defined as:
\begin{equation}\label{eq:aipw}
  \thetaaipw = \frac{1}{n} \sum_{i = 1}^n D_{\hat\eta_{j(i)}, \delta}(O_i) =
    \frac{1}{n}\sum_{i = 1}^n \left\{D^Y_{\hat\eta_{j(i)}, \delta}(O_i) +
    D^A_{\hat\eta_{j(i)}, \delta}(O_i) + D^{Z,W}_{\hat\eta_{j(i)},
    \delta}(O_i) \right\}.
\end{equation}

In a randomized trial the estimator may also be computed by setting
$ D^A_{\hat\eta_{j(i)}, \delta}(O_i)=0$. $M$-estimation theory may be
used to derive the asymptotic distribution of $\thetaaipw$. Asymptotic
linearity and efficiency of the estimator for modified treatment
policies is detailed in the following theorem:

\begin{theorem}[Pointwise weak convergence for modified treatment
  policies]\label{ass:lin1}
  Let $\norm{\cdot}$ denote the $L_2(\P)$-norm defined as
  $\norm{f}^2 = \int f^2 \dd \P$. Assume
  \begin{enumerate}[label=(\roman*)]
  \item
    $\norm{\hat{m} - m} \left\{\norm{\hat{g} - g}+ \norm{\hat{e}
      - e} \right\} + \norm{\hat{g} - g}\, \norm{\hat{\phi} - \phi} =
      o_\P(n^{-1/2})$\label{ass:sec1}, and
  \item $\P\{|D_{\eta, \delta}(O)| \leq C \} = \P \{|
    D_{\hat{\eta}, \delta}(O) | \leq C \} = 1$ for some $C < \infty$, and
    \label{ass:bounded}
  \item The modified treatment policy $d(a,w)$ is piecewise smooth
    invertible (\ref{ass:inv}).
  \end{enumerate}
Then
  $$\sqrt{n}\{\thetaaipw - \theta(\delta)\} \rightsquigarrow N(0,
    \sigma^2(\delta)),$$ where
  $\sigma^2(\delta) = \var\{D_{\eta, \delta}(O) \}$ is the efficiency bound.
\end{theorem}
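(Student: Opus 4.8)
The plan is to carry out the standard one-step / cross-fitting analysis: decompose $\thetaaipw - \theta(\delta)$ into an asymptotically linear leading term, an empirical-process remainder killed by sample splitting, and a second-order bias remainder controlled by condition \ref{ass:sec1}. Fix $\delta$ throughout (the convergence asserted is pointwise in $\delta$). Write $\Pnj$ for the empirical distribution over the validation fold ${\cal V}_j$, and let $\P$ act only on the argument $O$ while holding $\hat\eta_j$ fixed; recall that $\P D_{\eta,\delta}=\theta(\delta)$, which is the case $\eta_1=\eta$ of Lemma~\ref{lemma:dr1}. Using $\thetaaipw = n^{-1}\sum_{i} D_{\hat\eta_{j(i)},\delta}(O_i)$, the first step is the algebraic identity
\[
  \thetaaipw - \theta(\delta) = (\Pn-\P)D_{\eta,\delta} \;+\; \sum_{j=1}^J \frac{|{\cal V}_j|}{n}\,(\Pnj-\P)\big(D_{\hat\eta_j,\delta}-D_{\eta,\delta}\big) \;+\; \sum_{j=1}^J \frac{|{\cal V}_j|}{n}\big(\P D_{\hat\eta_j,\delta}-\theta(\delta)\big).
\]

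The first term equals $n^{-1}\sum_i\{D_{\eta,\delta}(O_i)-\theta(\delta)\}$, an average of i.i.d.\ mean-zero summands with variance $\sigma^2(\delta)=\var\{D_{\eta,\delta}(O)\}$, finite by the boundedness in \ref{ass:bounded}; the classical central limit theorem gives $\sqrt n\,(\Pn-\P)D_{\eta,\delta}\rightsquigarrow N(0,\sigma^2(\delta))$. For the second term I would condition on the training sample ${\cal T}_j$, so that $\hat\eta_j$ is fixed and $(\Pnj-\P)(D_{\hat\eta_j,\delta}-D_{\eta,\delta})$ is a centered average of $|{\cal V}_j|$ i.i.d.\ terms with conditional variance $|{\cal V}_j|^{-1}\P(D_{\hat\eta_j,\delta}-D_{\eta,\delta})^2$. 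By \ref{ass:bounded} the integrand is bounded, and by the $L_2(\P)$-consistency of $\hat\eta_j$ (implied by \ref{ass:sec1}) together with $L_2$-continuity of $\eta\mapsto D_{\eta,\delta}$ — which uses that $e$ and $\hat e$ are bounded away from zero, itself a consequence of \ref{ass:bounded} — this conditional variance is $o_\P(1/n)$, so a conditional Chebyshev argument (as in \citealp{zheng2011cross,chernozhukov2018double}) shows the second term is $o_\P(n^{-1/2})$; summing over the finitely many folds preserves this rate.

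The crux is the third term, which I would bound through a von Mises-type expansion, established as Theorem~\ref{theo:dr1} in the Supplementary Materials: for a modified treatment policy satisfying \ref{ass:inv}, the change-of-variables formula \eqref{eq:gdelta} yields $\theta(\delta)=\E\{\int m(d(A,W),z,W)\,r(z\mid W)\dd\nu(z)\}$, and expanding $\P D_{\eta_1,\delta}-\theta(\delta)$ in $\eta_1$ one finds that all first-order terms cancel, leaving a remainder that is a finite sum of integrals of pairwise products among $(\hat m_j-m)$, $(\hat g_j-g)$, $(\hat e_j-e)$, and $(\hat\phi_j-\phi)$. Cauchy--Schwarz then gives $|\P D_{\hat\eta_j,\delta}-\theta(\delta)| \lesssim \norm{\hat m_j-m}\{\norm{\hat g_j-g}+\norm{\hat e_j-e}\}+\norm{\hat g_j-g}\,\norm{\hat\phi_j-\phi}$, which is $o_\P(n^{-1/2})$ by \ref{ass:sec1}. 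Combining the three pieces, $\sqrt n\{\thetaaipw-\theta(\delta)\}=\sqrt n\,(\Pn-\P)D_{\eta,\delta}+o_\P(1)\rightsquigarrow N(0,\sigma^2(\delta))$, and since $D_{\eta,\delta}-\theta(\delta)$ is the efficient influence function (Theorem~\ref{theo:eif}), $\sigma^2(\delta)$ is the nonparametric efficiency bound.

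The main obstacle is the third step: deriving the exact second-order product structure of $\P D_{\hat\eta_j,\delta}-\theta(\delta)$. This requires careful bookkeeping of the cancellations coming from the three EIF components of Theorem~\ref{theo:eif} and the auxiliary parameter $\phi$ of Lemma~\ref{lemma:mtp}, and it relies essentially on \ref{ass:inv} to justify the change of variables (without it, as Lemma~\ref{lemma:dr2} shows, the decomposition takes a different, $g$-non-robust form); the reparametrization $e=gq/r$ together with \eqref{eq:paramnor} is what keeps the remainder free of hard-to-estimate multivariate mediator densities. A secondary technical point is the verification, in the second step, of the $L_2$-continuity of $\eta\mapsto D_{\eta,\delta}$ and of the implicit positivity of $e$, both of which I expect to follow from \ref{ass:bounded}.
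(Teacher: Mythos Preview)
Your proof is correct and follows essentially the same three-term decomposition as the paper: a leading i.i.d.\ term handled by the CLT, a cross-fitted empirical-process remainder, and a second-order bias controlled via Theorem~\ref{theo:dr1} and Cauchy--Schwarz under \ref{ass:sec1}. The only cosmetic difference is in the empirical-process remainder: the paper invokes Lemma~19.33 of \citet{vanderVaart98} on the singleton class $\{D_{\hat\eta_j,\delta}-D_{\eta,\delta}\}$, whereas you use a conditional Chebyshev bound; both reduce to requiring $\norm{D_{\hat\eta_j,\delta}-D_{\eta,\delta}}=o_\P(1)$, so the arguments are equivalent here.
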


Theorem~\ref{ass:lin1} establishes the weak convergence of
$\thetaaipw$ pointwise in $\delta$. This convergence is useful to
derive confidence intervals in situations where the modified treatment
policy has a suitable scientific interpretation for a given $\delta$,
such as in our Example~\ref{ex:1}. Under the assumptions of the
theorem, an estimator $\hat\sigma^2(\delta)$ of $\sigma^2(\delta)$ may
be obtained as the empirical variance of
$D_{\hat\eta_{j(i)},\delta}(O_i)$, and a Wald-type confidence interval
may be constructed as
$\thetaaipw\pm z_{1-\alpha/2} \hat\sigma(\delta)/\sqrt{n}$.

For the remainder of this section, we turn our attention to a
discussion of uniform convergence of $\thetaaipw$. Such a convergence
result will prove useful in the following section, where we establish
a hypothesis test of no direct effect. Such a test is constructed by
rejecting the hypothesis if the direct effect is non-significant (at
level $\alpha$), uniformly in $\delta$. To build such a testing
procedure, we focus on the intervention defined in terms of
exponential tilting (\ref{eq:tilt}). Results for modified treatment
policies are possible as well; however, these require smoothness
assumptions on the map $\delta \mapsto g_\delta(a \mid w)$. Inspection
of (\ref{eq:gdelta}) reveals that this may in turn require smoothness
assumptions on $a \mapsto g(a \mid w)$, which may not be justifiable
in a number of applications. We thus focus on exponential tilting,
which yields smooth maps $\delta \mapsto g_\delta(a \mid w)$ by
construction. This discussion, together with Lemmas~\ref{lemma:mtp}
and \ref{lemma:tilt}, thus reveals a trade-off between smoothness and
robustness in estimation of modified treatment policies and
exponential tilting.

\begin{theorem}[Uniform weak convergence for exponential
  tilting]\label{ass:lin2}
  Let $g_\delta$ be the exponential tilting intervention distribution
  (\ref{eq:tilt}) and let $\Delta = [\delta_l, \delta_u]$ denote an
  interval with $0 < \delta_l \leq \delta_u < \infty$.  Define
  $c(w) = \{\int_a \exp(\delta a)g(a\mid w)\}^{-1}$. Assume
  $||\hat{c} - c||^2 = o_\P(n^{-1/2})$ as well as \ref{ass:sec1} and
  \ref{ass:bounded} stated in Theorem~\ref{ass:lin1}. Then
  \[\sqrt{n}\{\thetaaipw - \theta(\delta)\} \rightsquigarrow \mathbb{G}(\delta)\]
  in $\ell^\infty(\Delta)$, where for any
  $\delta_1,\delta_2\in \Delta$, $\mathbb{G}(\cdot)$ is a mean-zero
  Gaussian process with covariance function
  $\E \{\mathbb{G}(\delta_1) \mathbb{G}(\delta_2) \} = \E\{D_{\eta,
    \delta_1} (O)D_{\eta, \delta_2}(O)\}$.
\end{theorem}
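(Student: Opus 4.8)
The plan is to reduce the uniform convergence statement to an application of a functional central limit theorem for the empirical process indexed by $\delta \in \Delta$, combined with the pointwise analysis already carried out in Theorem~\ref{ass:lin1}. First I would write, for each fixed $\delta$, the standard decomposition of the cross-fitted one-step estimator,
\[
\thetaaipw - \theta(\delta) = (\Pn - \P) D_{\eta, \delta}
  + \frac{1}{J}\sum_{j=1}^J (\Pnj - \P)\{D_{\hat\eta_j, \delta} - D_{\eta, \delta}\}
  + \frac{1}{J}\sum_{j=1}^J \P\{D_{\hat\eta_j, \delta} - D_{\eta, \delta}\} + R_n(\delta),
\]
where the last term collects the difference between $\P D_{\eta,\delta}$ and $\theta(\delta)$, which vanishes since $D_{\eta,\delta}$ is the EIF (Theorem~\ref{theo:eif}, Lemma~\ref{lemma:tilt}). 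The third term is the von Mises / second-order remainder; by Lemma~\ref{lemma:dr2} and the algebra underlying Theorem~\ref{theo:dr2} it is a sum of products of the form $\norm{\hat m - m}\{\norm{\hat g - g} + \norm{\hat e - e}\} + \norm{\hat g - g}\norm{\hat\phi - \phi}$ plus a term controlled by $\norm{\hat c - c}^2$ arising because $g_\delta$ and hence $D^A_{\eta,\delta}$ depend on the normalizing constant $c(w)$; all of these are $o_\P(n^{-1/2})$ by the stated assumptions, and crucially these bounds hold \emph{uniformly} in $\delta \in \Delta$ because $\exp(\delta a)$ and $c(w)^{-1} = \int \exp(\delta a) g(a\mid w)$ are bounded and Lipschitz in $\delta$ on the compact interval $\Delta$ (using boundedness of $A$ implicit in \ref{ass:bounded}). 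The second term (the empirical-process term on the estimated minus true influence function) is handled exactly as in the cross-fitting literature: conditionally on the training fold ${\cal T}_j$, it has conditional mean zero and conditional variance bounded by $\norm{D_{\hat\eta_j,\delta} - D_{\eta,\delta}}^2 = o_\P(1)$ uniformly in $\delta$, so by a conditional maximal inequality it is $o_\P(n^{-1/2})$ in $\ell^\infty(\Delta)$; the boundedness assumption \ref{ass:bounded} supplies the envelope needed here and removes any need for entropy conditions on the function class used to fit $\hat\eta$.

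It then remains to show that the leading term $\sqrt{n}(\Pn - \P) D_{\eta,\delta}$, viewed as a random element of $\ell^\infty(\Delta)$, converges weakly to the Gaussian process $\mathbb{G}(\cdot)$ with the stated covariance. For this I would verify that the class ${\cal F} = \{D_{\eta,\delta} : \delta \in \Delta\}$ is $\P$-Donsker. The map $\delta \mapsto D_{\eta,\delta}(o)$ is, for each fixed $o$, a smooth (indeed real-analytic) function of $\delta$: $D^Y_{\eta,\delta}$ and $D^{Z,W}_{\eta,\delta}$ depend on $\delta$ only through $g_\delta(a\mid w) = c(w)\exp(\delta a) g(a\mid w)$, and $D^A_{\eta,\delta}$ through $g_\delta/g$ and $\int \phi g_\delta \dd\kappa$, all of which are compositions of $\exp(\delta a)$ with bounded integration. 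Differentiating in $\delta$ and using boundedness of $a$, $m$, $\phi$, and the weights (again \ref{ass:bounded}), one gets a uniform Lipschitz bound $|D_{\eta,\delta_1}(o) - D_{\eta,\delta_2}(o)| \leq L(o)|\delta_1 - \delta_2|$ with $L \in L_2(\P)$; hence ${\cal F}$ is a Lipschitz-parametrized class over a compact interval, so it has finite bracketing integral and is Donsker. Marginal convergence of finite-dimensional distributions follows from the ordinary multivariate CLT, and the covariance is computed directly as $\E\{D_{\eta,\delta_1}(O) D_{\eta,\delta_2}(O)\}$ (the means being zero since each $D_{\eta,\delta}$ is an influence function). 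Tightness plus finite-dimensional convergence gives weak convergence in $\ell^\infty(\Delta)$, and Slutsky for stochastic processes combines this with the negligibility of the remainder terms established above.

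I expect the main obstacle to be making the uniform-in-$\delta$ control of the second-order remainder and of the cross-fit empirical-process term genuinely uniform rather than merely pointwise --- i.e., upgrading "for each $\delta$, the remainder is $o_\P(n^{-1/2})$" to "$\sup_{\delta\in\Delta}$ of the remainder is $o_\P(n^{-1/2})$." The key leverage is that on $\Delta = [\delta_l,\delta_u]$ with $\delta_l > 0$ bounded away from $0$ and $\delta_u < \infty$, together with bounded $A$, all the $\delta$-dependent factors ($\exp(\delta a)$, $c(w)$, $\hat c(w)$, and their ratios) are uniformly bounded above and below and uniformly Lipschitz in $\delta$; this lets one pull a single supremum outside each product of nuisance-estimation errors and reduce everything to the $\delta$-free rates already assumed in \ref{ass:sec1} plus the new assumption $\norm{\hat c - c}^2 = o_\P(n^{-1/2})$. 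A secondary technical point is justifying that $\hat c$ can be taken to satisfy the same boundedness as $c$ (one may truncate $\hat c$ at known bounds without affecting the rate), which keeps the envelope of the estimated influence-function class integrable.
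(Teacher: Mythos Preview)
Your proposal is correct and follows essentially the same route as the paper: the same three-term decomposition, the Donsker argument for the leading term via Lipschitz parametrization in $\delta$, the uniform second-order remainder bound via Theorem~\ref{theo:dr2} together with the rate conditions and $\norm{\hat c - c}^2 = o_\P(n^{-1/2})$, and a conditional (cross-fit) argument for the empirical-process remainder. The one place where the paper is more explicit than your sketch is exactly the obstacle you flagged: for $\sup_{\delta\in\Delta}|\Gnj(D_{\hat\eta_j,\delta}-D_{\eta,\delta})|$ the paper invokes the bracketing-integral maximal inequality (Theorem~2.14.2 of van der Vaart--Wellner) with the envelope $F_n^j=\sup_\delta|D_{\hat\eta_j,\delta}-D_{\eta,\delta}|$ and bounds the bracketing numbers of the $\delta$-indexed class via its Lipschitz structure, so your ``conditional maximal inequality'' should be filled in precisely that way.
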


\subsection{Uniform inference and tests for the hypothesis of no
  direct effect}\label{sec:node}
In this section, we consider estimation of the direct effect
$\beta(\delta) = \theta(\delta)-\E(Y)$. Define the corresponding
(uncentered) influence function $S_{\eta, \delta}(o) = D_{\eta,
\delta}(o) - y$. A straightforward extension of Theorem~\ref{ass:lin2} shows that
$\hat{\beta}(\delta) = \hat\theta(\delta)-\bar{Y}$ converges weakly to a
process ${\mathbb{G}}(\delta)$ with covariance function
$\E\{\mathbb{G}(\delta_1) \mathbb{G}(\delta_2)\} = \E\{S_{\eta, \delta_1}(O)
S_{\eta, \delta_2}(O) \}$.

We now present an approach to constructing uniform confidence bands on
the function $\beta(\delta)$, allowing testing of the null hypothesis
of no direct effect $H : \sup_{\delta \in \Delta} \beta(\delta) =
0$. This hypothesis test is useful for checking the existence of a
direct effect even if the interpretation of the exponential tilt
$g_\delta$ (e.g., as the odds ratio comparing post vs pre-intervention
odds of exposure) does not answer a particularly meaningful question
in a given application. Let $\hat\sigma(\delta)$ denote the empirical
variance of $S_{\hat \eta_{j(i)}, \delta}(O_i)$. Our goal will be
achieved by finding a value $c_\alpha$ such that
$\hat \rho(c_\alpha) = 1 - \alpha$, where $\hat\rho$ is a function
such that
\begin{equation}\label{eq:apmb}
  \hat\rho(t) = \prob\left(\sup_{\delta \in \Delta} \bigg| \frac{\hat\beta(\delta)
      - \beta(\delta)}{\hat\sigma(\delta) / \sqrt{n}}\bigg| \leq t \right) +
      o_\P(1).
\end{equation}
Confidence bands may be computed as
$\hat\theta \pm n^{-1/2}c_\alpha\hat\sigma(\delta)$, and p-values for
$H$ can be computed by evaluating $1 - \hat\rho(t)$ at the observed
value of the supremum test statistic. The function $\hat\rho(t)$ may
be obtained by approximating the distribution of
$\sup_{\delta \in \Delta} {\mathbb{G}(\delta)}$, where
${\mathbb{G}(\delta)}$ is the Gaussian process defined above. In this
paper we take the approach proposed by
\cite{kennedy2018nonparametric}, using the multiplier bootstrap
\citep{gine1984some,vanderVaart&Wellner96,chernozhukov2013gaussian,
  belloni2015uniformly}. We omit the relevant proofs as they are
identical to those presented by \cite{kennedy2018nonparametric}. In
comparison with the nonparametric bootstrap, the multiplier bootstrap
has the computational advantage that the nuisance estimators
$\hat\eta$ need not be re-estimated. In comparison with directly
sampling $\sup_{\delta \in \Delta} {\mathbb{G}(\delta)}$, the proposed
procedure does not require the evaluation of potentially large
covariance matrices; therefore, it is far more computationally
efficient and convenient.

The multiplier bootstrap approximates the distribution of
$\sup_{\delta \in \Delta} {\mathbb{G}(\delta)}$ with the supremum of the process
$${\mathbb{M}}(\delta) = \frac{1}{\sqrt{n}}\sum_{i = 1}^n \frac{\xi_i
  \{S_{\hat\eta_{j(i)}, \delta}(O_i) - \hat\beta(\delta)\}}
  {\hat\sigma(\delta)},$$
where randomness is introduced through sampling the multipliers
$(\xi_1, \ldots, \xi_n)$, despite the process being conditional on the observed
data $O_1, \ldots, O_n$. The multiplier variables are i.i.d.~with mean zero and
unit variance, and are drawn independently from the sample. Typical choices are
Rademacher ($\prob(\xi = -1) = \prob(\xi=1)=0.5$) or Gaussian multipliers. Under
the assumptions of Theorem~\ref{ass:lin2}, plus uniform consistency of
$\hat\sigma(\delta)$, it can be shown that (\ref{eq:apmb}) holds for
$$\hat\rho(t) = \prob \left(\sup_{\delta \in \Delta} \big|{\mathbb{M}}(\delta)
  \big|\leq t\,\bigg|\, O_1, \ldots, O_n\right).$$
As a consequence, computation of the critical value, p-values, and confidence
intervals only requires simulation of a large number of realizations of the
multipliers over a fine grid over $\Delta$.

\section{Simulation study}
We now turn to comparing the three estimators of the direct
effect, previously considered in Section \ref{sec:estima}. In
particular, we investigate the performance of the substitution
(\ref{eq:gcomp}, \ref{eq:est_sub}), re-weighted (\ref{eq:ipw},
\ref{eq:est_re}), and efficient (\ref{eq:aipw}) estimators in the case
of an incremental propensity score (IPS) intervention on a binary intervention variable of interest. The estimators are evaluated on
data simulated from the following data-generating mechanism:
\begin{align*}
  W_1 &\sim \bern(0.50);
  W_2 \sim \bern(0.65);
  W_3 \sim \bern(0.35) \\
  A &\sim \bern\left(\frac{1}{4} \cdot \sum_{j = 1}^3 W_j +
    0.1\right) \\
  Z_1 &\sim \bern\left(1 - \expit\left[\frac{A + W_1}{A +
        W_1 + 0.5}\right]\right) \\
  Z_2 &\sim \bern\left(\expit\left[\frac{(A - 1) +
        W_2}{W_3 + 3}\right]\right) \\
  Z_3 &\sim \bern\left(\expit\left[\frac{(A - 1) + 2
        \cdot W_1 - 1}{2 \cdot W_1 + 0.5}\right]\right)\\
  Y &= Z_1 + Z_2 - Z_3 + A - 0.1 \cdot \left(\sum_{j = 1}^3 W_j\right)^2 +
    \epsilon,
\end{align*}
where $\bern(p)$ is the Bernoulli distribution with parameter $p$,
$\expit(x) = \{1 + \exp(-x)\}^{-1}$ is the CDF of the logistic
distribution (as implemented in the \texttt{plogis} function in the
\texttt{R} programming language), and
$\epsilon \sim \text{N}(0,0.25)$.  The data available on a single
observational unit is denoted by the random variable
$O = (W_1, W_2, W_3, A, Z_1, Z_2, Z_3, Y)$, where, in any given
simulation, we consider observing $n$ i.i.d.~copies of $O$ for one of
seven sample sizes $n \in \{400, 900, 1600, 2500, 3600, 4900, 6400\}$.

Under the above data-generating mechanism, we seek to estimate the
direct effect under an incremental propensity score intervention
$\delta = 0.5$, for which the true value of the natural direct effect
is approximately $0.137$. We approximated this effect by using the
alternative representation of $\theta(\delta)$ as
\[\theta(\delta) = \E\left\{\int m(a,Z,W)g_\delta(a\mid
    W)\dd\nu(a)\right\},\] where the inner integral is approximated by
Monte Carlo integration through a large sample $a_1,\ldots,a_m$ of
uniformly distributed numbers in the range of $A$, and the outer
expectation is approximated through the law of large numbers by drawing a
large sample $(W_1,Z_1), \ldots, (W_k, Z_k)$ from the joint
distribution of $(W,Z)$. Each of the estimators is evaluated by
contrasting regimes in which the appropriate nuisance parameters are
fit via a well-specified nonparametric regression or misspecified by
fitting an intercept model. The enumerated set of estimators and
regimes is summarized in Table~\ref{table:est_perform} and
Figure~\ref{fig:all_metric_comparison}. In order to ensure a
well-specified nonparametric regression for the nuisance parameters,
we rely on the highly adaptive lasso (HAL) estimator, a recently
proposed nonparametric regression function with properties
guaranteeing convergence of estimated nuisance components at the
$n^{1/4}$-rates required by our theorems \citep{benkeser2016highly,
  vdl2017generally, vdl2018highly}.

\begin{table}[H]
  \centering
  \begin{tabular}{r|c|c|c|c|c|c|c}\hline
    & \multicolumn{7}{c}{$n$}\\\hline
    \textit{Estimator} & $400$ & $900$ & ${1600}$ & ${2500}$ &
    ${3600}$ & ${4900}$ & ${6400}$\\
    \hline
    Substitution       & 0.083 & 0.086 & 0.084 & 0.077 & 0.072 & 0.074 & 0.075  \\
    Reweighted (IPW)   & 0.105 & 0.120 & 0.111 & 0.116 & 0.107 & 0.112 & 0.109  \\
    Efficient          & 0.092 & 0.086 & 0.071 & 0.072 & 0.068 & 0.067 & 0.065  \\
    Efficient (E mis.) & 0.060 & 0.060 & 0.060 & 0.059 & 0.054 & 0.059 & 0.058  \\
    Efficient (M mis.) & 0.165 & 0.130 & 0.110 & 0.107 & 0.099 & 0.103 & 0.097  \\
    Efficient (G mis.) & 0.436 & 0.829 & 1.255 & 1.912 & 2.662 & 3.543 & 4.519  \\
    \hline
  \end{tabular}
  \caption{Mean-squared errors (MSE), scaled by $n$, of the three key estimators of the direct effect
    under an IPS intervention $\delta = 0.5$, across $1000$ Monte Carlo
    simulations for each of seven sample sizes. Substitution and
    reweighted estimators are computed using HAL for $g$, $m$, and
    $e$. ``E mis.'' denotes that $e$ was inconsistently estimated via an
    intercept-only logistic regression model, ``M mis.'' and ``G mis.'' denote
    analogous estimators.}
  \label{table:est_perform}
\end{table}

\begin{figure}[H]
  \hspace*{-1.5cm}
  \centering
  \includegraphics[scale=0.4]{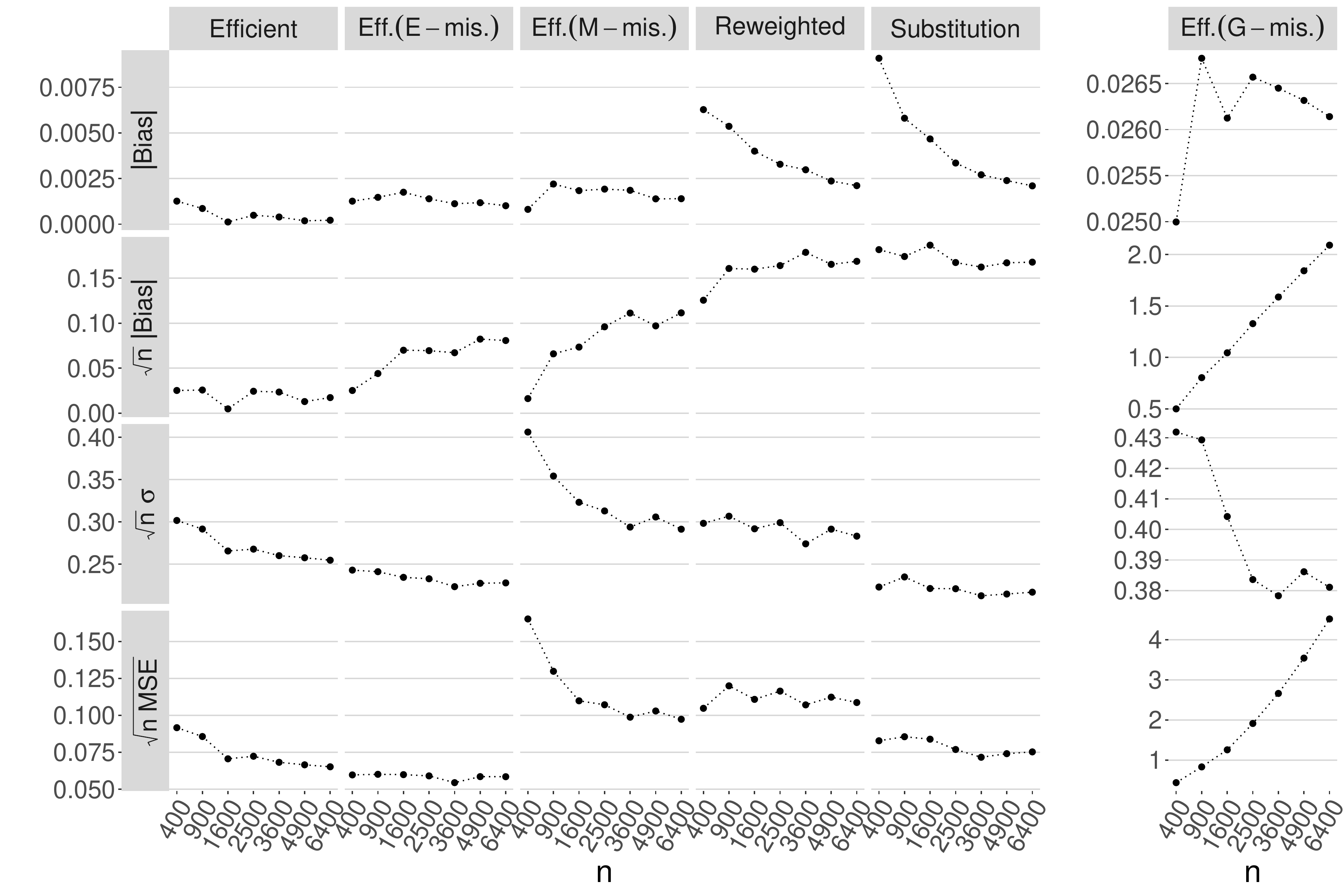}
  \caption{Statistics for the three key estimators (and variations
    thereof) of the direct effect under an IPS intervention
    $\delta = 0.5$, across $1000$ Monte Carlo simulations for each of
    seven sample sizes.}
  \label{fig:all_metric_comparison}
\end{figure}

Inspection of the mean-squared error, after scaling by $\sqrt{n}$,
reveals that the substitution estimator and the efficient one-step
estimator both display excellent, essentially equivalent performance
when nuisance components are estimated using the highly adaptive
lasso. The one-step estimator has slightly better performance, which
seems to be driven by a better bias-variance trade-off. In contrast to
the substitution estimator, the efficient one-step estimator carries
the advantage of being double robust, allowing misspecification of
either the outcome regression (denoted ``M'') or the
mediator-inclusive propensity score (denoted ``E''). The robustness of
the efficient estimator to the misspecification of these nuisance
components --- and the lack of robustness to the mediator-exclusive
propensity score (denoted ``G'') --- are demonstrated in the last
three rows of Table~\ref{table:est_perform}.
Figure~\ref{fig:all_metric_comparison} visualizes the performance of the
estimators, and their misspecified variants, in terms of both the MSE
(as presented in Table~\ref{table:est_perform}) and its individual
components, the bias and standard error. This comparison of the
estimators reveals that the correctly-specified one-step efficient
estimator displays excellent performance in terms of both bias and
variance while its non-robust misspecified variant displays an
asymptotic bias that grows with sample size. Interestingly, the
one-step estimator with $e$ inconsistently estimated displayed
better performance than the fully efficient version. This is possibly
an idiosyncrasy of this simulation due to the fact that
misspecification through an intercept-only model generates smaller
variable weights. Altogether, these numerical investigations
demonstrate the utility of the proposed estimators in settings where
the nonparametric estimation of nuisance components is viable;
moreover, in applied data analytic settings where this procedure may
be of interest, the one-step efficient estimator is clearly preferable on
account of its multiple robustness. All numerical studies of the
estimators were performed using the implementations available in the
\texttt{medshift} software package \citep{hejazi2019medshift} for the
\texttt{R} language and environment for statistical computing
\citep{R}.

\section{Application}

We now turn to considering a scenario in which the decomposition
proposed in equation \ref{eqn:pie_decomp} and the proposed efficient
estimator (\ref{eq:aipw}) may be used to estimate direct and indirect
effects. To proceed, we take as example a simple data set from an
observational study of the relationship between BMI and children's
behavior, distributed as part of the \texttt{mma R} package, available via the
Comprehensive \texttt{R} Archive Network
(\url{https://CRAN.R-project.org/package=mma}). The documentation of this data
set describes it as a ``database obtained from the Louisiana State University
Health Sciences Center, New Orleans, by Dr.~Richard Scribner [who] explored the
relationship between BMI and kids behavior through a survey at children,
teachers and parents in Grenada in 2014. This data set includes $691$
observations and $15$ variables.'' In particular, we consider a
modified version of this data set with all missing values removed, as
these are irrelevant to the demonstration of the proposed
methodology. In standard data analytic practice, we advocate for the use of the
proposed methodology in tandem with a correction for missing data,
such as imputation or weighting by inverse probability of censoring.
\citep{carpenter2006comparison,vansteelandt2010analysis,seaman2012combining}.

To demonstrate the assessment of the direct and indirect effect with
this observational data set, we consider the effect of participation
in a sports team on the BMI of children, taking several related
covariates as mediators (including snacking, exercising, and overweight status)
and all other collected covariates as potential confounders. As the
intervention variable is binary, we frame our proposal in terms of an
incremental propensity score intervention \citep{kennedy2018nonparametric},
wherein the odds of participating in a sports team is increased by a factor of
$\delta = 2$ for each individual. Such a stochastic exposure regime may be
interpreted as the introduction of a school program or policy that motivates
children to opt in to participating in a sports team, doubling the odds of such
voluntary participation.

\subsection{Estimation Strategy}

As noted in equation \ref{eqn:pie_decomp}, the population intervention effect
admits a decomposition in terms of components that allow estimation of the
direct and indirect effects. We compute each of the components of the direct
and indirect effects using appropriate estimators as follows

\begin{itemize}
  \itemsep0.25pt
\item for $\mathbb{E}\{Y(A, Z)\} = \mathbb{E}Y$, the natural value of
  the outcome under no intervention, the empirical mean in the sample serves
  as an efficient estimator;
\item for $\mathbb{E}\{Y(A_{\delta}, Z)\} = \theta(\delta)$, the mean
  outcome under an intervention altering the exposure mechanism but
  not the mediation mechanism, a one-step efficient estimator, denoted
  $\hat{\theta}(\delta)$, is proposed as equation \ref{eq:aipw} and made
  available via the \texttt{medshift R} package
  \citep{hejazi2019medshift};
\item for $\mathbb{E}\{Y(A_{\delta})\} = \psi(\delta)$,
  the mean outcome under an intervention altering both the exposure
  and mediation mechanisms, a one-step efficient estimator, denoted
  $\hat{\psi}(\delta)$ in the sequel, is easily estimable using the
  \texttt{npcausal R} package \citep{kennedy2018npcausal}.
\end{itemize}

In the construction of estimators for $\theta(\delta)$ and $\psi(\delta)$,
data adaptive nonparametric regression procedures are incorporated to allow the
relevant nuisance parameters of each estimator to be computed in a flexible
manner using various \texttt{R} packages. The \texttt{npcausal} package allows
the estimator $\hat{\psi}(\delta)$ to be constructed using the \texttt{ranger}
algorithm \citep{wright2015ranger}, an efficient and fast implementation of
random forests \citep{breiman2001random}. In constructing
$\hat{\theta}(\delta)$, the \texttt{medshift} package provides facilities for
estimating nuisance parameters data adaptively via the Super Learner algorithm
\citep{vanderLaan&Polley&Hubbard07} for constructing ensemble learners through
cross-validation, using its implementation in the  \texttt{sl3} package
\citep{coyle2018sl3}. In particular, the Super Learner procedure was used to
create a weighted ensemble of algorithms from a library including extreme
gradient boosting via the \texttt{xgboost} package \citep{chen2016xgboost}, with
variants including 50, 100, and 300 boosting iterations; variants of random
forests using 50, 100, and 500 trees; $L_1$-penalized lasso and $L_2$-penalized
ridge GLMs via the \texttt{glmnet} package \citep{friedman2009glmnet}; an
elastic net GLM with equally weighted $L_1$ and $L_2$ penalization terms (also
via \texttt{glmnet}); a main terms GLM; an intercept model; and the highly
adaptive lasso \citep{benkeser2016highly}, with 5--fold cross-validation and up
to either 3-way or 5-way interaction terms, using the \texttt{hal9001} package
\citep{coyle2018hal9001}.

\subsection{Estimating the Direct and Indirect Effects}

From the decomposition given in equation \ref{eqn:pie_decomp}, the
direct effect may be denoted
$\beta(\delta) = \theta(\delta)-\mathbb{E}Y$. An estimator of the
direct effect, $\hat{\beta}(\delta)$ may be expressed as a composition
of estimators of its constituent parameters:
\begin{equation*}
  \hat{\beta}({\delta}) = 
  \hat{\theta}(\delta)-\frac{1}{n} \sum_{i = 1}^n Y_i.
\end{equation*}
Using the estimation strategies previously outlined, we may construct
an estimate of the direct effect through a straightforward application
of the delta method, yielding both a point estimate and associated
standard errors under our proposed stochastic intervention
policy. Similarly, the indirect effect $\psi(\delta)-\theta(\delta)$
may be estimated as $\hat{\psi}(\delta)-\hat{\theta}(\delta)$. We
provide both point estimates and associated inference under our
proposed stochastic intervention policy in
Table~\ref{table:nde_nie_applic}.

\begin{table}[H]
  \centering
  \begin{tabular}{r|r|r|r}
    \hline
    \textit{Parameter} & Lower 95\% CI & Estimate & Upper 95\% CI\\
    \hline
    Direct Effect & -0.458 & 0.011 & 0.479\\
    \hline
    Indirect Effect & -0.672 & -0.157 & 0.357\\
    \hline
  \end{tabular}
  \caption{Point estimates and 95\% confidence intervals
    for the  direct effect and  indirect effect for an IPS
    intervention of $\delta = 2$ applied to the data set from the
    \texttt{mma R} package.}
  \label{table:nde_nie_applic}
\end{table}

From the estimates in Table~\ref{table:nde_nie_applic}, the conclusion may
be easily drawn that there is little total effect of doubling the odds of
participation in a sports team on the BMI of children, based on the data
collected in the observational study made available in the \texttt{mma R}
package. For reference, the marginal odds of participating in a sports team in
the observed data are $0.69$, whereas the odds under the intervention considered
are $1.38$. Based on the 95\% confidence intervals around the point estimates,
we cannot conclude that the proposed incremental propensity score intervention
is sufficiently efficacious to decrease children's BMI. However, the
magnitude of the effects seem to be in the correct direction, with
increased participation in a sports team causing a reduction of BMI of
$0.157$ through changes in behaviors such as snacking and
exercise. Using an approach similar to that demonstrated with this
data set, the direct and indirect effects attributable to
interventions with higher odds of participating in a sports team are
easily estimable.

\section{Discussion}
We have proposed a novel mediation analysis based on the decomposition of the
causal effect of a stochastic intervention on the population, focusing on two
types of interventions: modified treatment policies and exponential tilting.
Unlike the natural direct effect of \cite{Pearl01}, identification of the
(in)direct effect proposed here does not require cross-world counterfactual
independencies, and is therefore achievable in an experimental setting
randomizing both the exposure and the mediator.

We present results for stochastic interventions defined as a modified treatment
policy and explicitly defined in terms of the post-intervention distribution
(exponential tilting). In addition to the considerations about robustness and
smoothness discussed in the present article, the choice between these two
options may also be guided by the fact that modified treatment policies are more
useful in practical settings as they can be used to inform feasible
interventions.

Note that our effect decomposition and estimators allow for
multivariate mediators. The interpretation of the (joint) indirect
effect in this case is entirely context-dependent. For example, the
multivariate mediators may represent an innately multivariate
construct (e.g., a psychological construct such as personality,
behavior, etc.)  in this case the indirect effect could be interpreted
as an effect through the construct (e.g., personality). Nonetheless,
our approach does not require that the multivariate mediators are part
of a single construct; the interpretation in these cases requires more
care.

We assume that there is no mediator-outcome confounder affected by exposure.
Point identification of natural (in)direct effects in the presence of such
variables is not generally possible, and its partial identification is an area
of active research
\citep{robins2010alternative,tchetgen2014bounds,miles2015partial}.

Lastly, for simplicity we focus on estimators constructed as solutions to the
efficient influence function estimating equation; moreover, we have made
implementations of each of the proposed estimators available in the free and
open source \texttt{medshift} software package \citep{hejazi2019medshift} for
the \texttt{R} language and environment for statistical computing \citep{R}.
Alternative estimation strategies, such as targeted minimum loss-based
estimation \citep{vdl2006targeted,vanderLaanRose11,vanderLaanRose18}, may have
better performance than our propsoed estimators in finite samples. The
development of such estimators will be the subject of future research and
software development.

\section{Proofs of results in the main document}
\subsection{Theorem \ref{theo:iden}}
\begin{proof}
  We prove the result separately for exponential tilting and for
  modified treatment policies. First, let $A_{\delta}$ denote a
  variable drawn from the exponentially tilted distribution
  $g_{\delta}(a\mid w)$. We have
  \begin{eqnarray*}
    \E\{Y(A_\delta, Z)\mid  A_\delta=a, Z=z,
    W=w\} & = & \E\{Y(a,z)\mid A_\delta=a, Z=z, W=w\}\\
          & = & \E\{Y(a,z)\mid Z=z, W=w\}\\
          & = & \E\{Y(a,z)\mid A=a, Z=z, W=w\}\\
          & = & m(z,a,w).
  \end{eqnarray*}
  The first equality follows from the definition of
  $Y(A_{\delta}, Z)$, the second equality follows because, by
  definition, $A_{\delta}\indep Y(a,z)\mid (W, Z)$, and the
  third equality follows from \ref{cond:cet}. Finally, the fourth
  equality follows from the consistency implied by he NPSEM:
  $(A,Z)=(a,z)\rightarrow Y(a,z)=Y$.

  Note that, by definition, $A_{\delta}\indep Z\mid W$. Thus
  \[    \E\{Y(A_\delta, Z)\} =\int_{\supp(g_{\delta})\times \supp(q)\times\supp(p)} m(z,a,w) r(z\mid w)
    g_{\delta}(a\mid w)p(w)d\nu(a,z,w),
  \]
  where \ref{cond:cs} ensures that $m(z,a,w)$ is defined in the
  integration set.

  If the intervention is a modified treatment policies, such as our
  Example 1 where $A_{\delta}= d(A, W)$, then the proof proceeds as
  follows. First of all, we have
  $A\indep Y(A_\delta, Z)\mid (A_{\delta}, Z, W)$, so that
  \[\E\{Y(A_\delta, Z)\mid  A_{\delta}=a, A=a', Z=z,
    W=w\} = m(z,a,w).\] Integrating the above expression with respect to
  the joint density of $(A_\delta, A,Z,W)$, and using
  \[r(z\mid w)=\int_{\supp(g)}q(z\mid a', w)g(a'\mid
    w)d\nu(a')\]
  yields the desired result.
\end{proof}

\subsection{DAGs compatible with \ref{cond:cet}}

\begin{lemma}
  Define the non-parametric structural equation model in
  (\ref{eq:npsem}). If ($U_A\indep U_Y$ and $U_W\indep U_Z$ and
  $U_Y\indep U_Z$) and either $U_Y\indep U_W$ or $U_A\indep U_W$, then
  \ref{cond:cet} holds.
\end{lemma}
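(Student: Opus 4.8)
The plan is to prove the slightly stronger statement $Y(a,z)\indep A\mid (W,Z)$ for every $(a,z)\in{\cal A}\times{\cal Z}$, which implies \ref{cond:cet}. Recall from the NPSEM (\ref{eq:npsem}) that the counterfactual of interest is $Y(a,z)=f_Y(W,a,z,U_Y)$, a deterministic function of $(W,U_Y)$, whereas $W=f_W(U_W)$, $A=f_A(W,U_A)$ and $Z=f_Z(W,A,U_Z)$ are deterministic functions of $(U_W,U_A,U_Z)$. The structural fact I would exploit is that the hypotheses partition the exogenous errors into mutually independent blocks: if $U_W\indep U_Y$ then $\{U_Y\}$ is independent of $\{U_W,U_A,U_Z\}$ (the extra constraint $U_W\indep U_Z$ internal to the large block plays no role here), while if $U_A\indep U_W$ then $\{U_A,U_Z\}$ is independent of $\{U_W,U_Y\}$. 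I would handle these two cases separately.

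In the case $U_W\indep U_Y$ we have $U_Y\indep(U_W,U_A,U_Z)$, hence $U_Y\indep(A,W,Z)$ and in particular $U_Y\indep(A,Z)\mid W$. Since $Y(a,z)=f_Y(W,a,z,U_Y)$ is measurable with respect to $(W,U_Y)$, this yields $\E\{Y(a,z)\mid A,W,Z\}=\E\{f_Y(W,a,z,U_Y)\mid W\}=\E\{Y(a,z)\mid W,Z\}$, which is precisely \ref{cond:cet}.

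In the case $U_A\indep U_W$ we have $(U_A,U_Z)\indep(U_W,U_Y)$. The subtlety here is that $A=f_A(W,U_A)$ depends on $U_W$ as well as $U_A$, so I would first condition on the event $\{W=w\}$, which belongs to $\sigma(U_W)\subseteq\sigma(U_W,U_Y)$; conditioning on such an event preserves both the marginal law of $(U_A,U_Z)$ and the independence $(U_A,U_Z)\indep U_Y$. Under the law given $W=w$, both $A=f_A(w,U_A)$ and $Z=f_Z(w,f_A(w,U_A),U_Z)$ are functions of $(U_A,U_Z)$ alone, so $(A,Z)\indep U_Y$ given $W=w$; letting $w$ vary gives $A\indep U_Y\mid (W,Z)$. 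Since $Y(a,z)=f_Y(W,a,z,U_Y)$, this yields $\E\{Y(a,z)\mid A,W,Z\}=\E\{Y(a,z)\mid W,Z\}$, i.e., \ref{cond:cet}.

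The main obstacle is the very first step: passing from the pairwise independencies in the hypothesis to the block independence used above. Pairwise independence of $U_Y$ from each of $U_W$, $U_A$, $U_Z$ does not by itself imply $U_Y\indep(U_W,U_A,U_Z)$, so the argument must invoke the standing convention that the NPSEM errors decompose into mutually independent blocks, or, equivalently, be run via m-separation in the associated acyclic directed mixed graph, whose bidirected edges join exactly the error pairs not listed as independent. In that graph the reduction is transparent: every path from $Y(a,z)$ to $A$ passes through the non-collider $W$, so conditioning on $\{W,Z\}$ blocks all of them, and neither the collider $Z$ nor any collider among the errors can reopen a path that $W$ has already blocked. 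Once this is granted, the remainder is the routine measurability-and-conditioning bookkeeping carried out in the two cases above.
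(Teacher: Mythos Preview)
Your proposal is correct and follows essentially the same route as the paper: both arguments rest on m-separation in the acyclic directed mixed graph obtained by adding bidirected edges between the error pairs not assumed independent. The paper simply draws the two ADMGs (one for each of the alternatives $U_A\indep U_W$ and $U_Y\indep U_W$) and states that $(W,Z)$ d-separates $Y(a,z)$ from $A$; you instead unpack this into an explicit block-independence-and-conditioning argument, and correctly flag that the passage from pairwise to block independence requires the standard NPSEM convention (equivalently, m-separation), which is exactly what the paper's graphical appeal is invoking.
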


\begin{proof}
  For fixed $(a,z)$, let $Y(a,z)=f_Y(W, a, z, U_Y)$. $(W,Z)$ d-separates
  $Y(a,z)$ from $A$ in Figures~\ref{fig:dag3} and \ref{fig:dag4},
  concluding the proof of the lemma.
  \begin{figure}[!htb]
    \centering
    \begin{tikzpicture}
      \Vertex{0, 1}{U_W}
      \Vertex{0, 0}{W}
      \Vertex{0, -1}{Z}
      \Vertex{0, -2}{U_Z}
      \Vertex{-1.41, -1}{A}
      \Vertex{-2, -2}{U_A}
      \Vertex{1.41, -1}{Y_{a,z}}
      \Vertex{2, -2}{U_Y}
      \Arrow{W}{A}{black}
      \Arrow{U_A}{A}{black}
      \Arrow{U_Y}{Y_{a,z}}{black}
      \Arrow{U_W}{W}{black}
      \Arrow{U_Z}{Z}{black}
      \Arrow{A}{Z}{black}
      \Arrow{W}{Y_{a,z}}{black}
      \Arrow{W}{Z}{black}
      \EdgeR{U_Y}{U_W}{black}
      \EdgeR{U_A}{U_Z}{black}
    \end{tikzpicture}
    \caption{Directed Acyclic Graph for $U_A \indep U_Y$ and $U_W \indep U_Z$ and
      $U_Y \indep U_Z$ and $U_A \indep U_W$.}
    \label{fig:dag3}
  \end{figure}
  \begin{figure}[!htb]
    \centering
    \begin{tikzpicture}
      \Vertex{0, 1}{U_W}
      \Vertex{0, 0}{W}
      \Vertex{0, -1}{Z}
      \Vertex{0, -2}{U_Z}
      \Vertex{-1.41, -1}{A}
      \Vertex{-2, -2}{U_A}
      \Vertex{1.41, -1}{Y_{a,z}}
      \Vertex{2, -2}{U_Y}
      \Arrow{W}{A}{black}
      \Arrow{U_A}{A}{black}
      \Arrow{U_Y}{Y_{a,z}}{black}
      \Arrow{U_W}{W}{black}
      \Arrow{U_Z}{Z}{black}
      \Arrow{A}{Z}{black}
      \Arrow{W}{Y_{a,z}}{black}
      \Arrow{W}{Z}{black}
      \EdgeL{U_A}{U_W}{black}
      \EdgeR{U_A}{U_Z}{black}
    \end{tikzpicture}
    \caption{Directed Acyclic Graph for $U_A\indep U_Y$ and $U_W\indep U_Z$ and
      $U_Y\indep U_Z$ and $U_Y\indep U_W$.}
    \label{fig:dag4}
  \end{figure}
\end{proof}
\subsection{Theorem~\ref{theo:eif} and Lemmas \ref{lemma:mtp} and
  \ref{lemma:tilt}}\label{sec:proofeif}
\begin{proof}
  In this proof we will use $\Theta(\P)$ to denote a parameter as a
  functional that maps the distribution $\P$ in the model to a
  real number. We will assume that the measure $v$ is discrete so that
  integrals can be written as sums. The resulting influence function
  will also correspond to the influence function of a general measure
  $\nu$. For example, the true parameter value is given by
  \[\theta(\delta)=\Theta(\P) = \sum_{y,z,a,w} m(a, z,
    w)g_{\delta}(a\mid w)p(z,w).\]
  Assume that $g_{\delta}$ is known.  Then the non-parametric MLE of
  $\theta(\delta)$ is given by
  \begin{align}
    \Theta(\Pn)&=\sum_{y,z,a,w}y\Pn(y|a,z,w)g_{\delta}(a\mid w)\Pn(z,w)\notag\\
               &=\sum_{y,z,a,w}y\frac{\Pn f_{y,a,z,w}}{\Pn
                 f_{a,z,w}}g_{\delta}(a\mid w)\Pn f_{z,w}\label{nonpest},
  \end{align}
  where we remind the reader of the notation $Pf =\int f dP$. Here
  $f_{y,a,z,w}=I(Y=y,A=a,Z=z,W=w)$, $f_{a,z,w}=I(A=a,Z=z,W=w)$ ,
  $f_{a,w}=I(A=a,W=w)$, $f_{z,w}=I(Z=z,W=w)$, and $I(\cdot)$ denotes
  the indicator function.

  We will use the fact that the efficient influence function in a
  non-parametric model corresponds with the influence curve of the
  NPMLE. This is true because the influence curve of any regular
  estimator is also a gradient, and a non-parametric model has only
  one gradient. Appendix 18 of \cite{vanderLaanRose11} shows that if
  $\hat \Theta(\Pn)$ is a substitution estimator such that
  $\theta(\delta)=\hat \Theta(\P)$, and $\hat \Theta(\Pn)$ can be written as
  $\hat \Theta^*(\Pn f:f\in\mathcal{F})$ for some class of functions
  $\mathcal{F}$ and some mapping $B^*$, the influence curve of
  $\hat \Theta(\Pn)$ is equal to
  \[IC(\P)(O)=\sum_{f\in\mathcal{F}}\frac{d\hat \Theta^*(\P)}{d\P f}\{f(O)-\P f\}.\]

  Applying this result to (\ref{nonpest}) with
  $\mathcal{F}=\{f_{y,a,z,w},f_{y,a,w},f_{a,w},f_{z,w},f_w\}$ gives an
  efficient influence function equal to
  $D^Y_{\eta,\delta}(o) + D^{Z,W}_{\eta,\delta}(o) -
  \theta(\delta)$. It remains to find the component $D^A_{\eta,\delta}(o)$ for
  each specific intervention. This component may be found as the IF
  of the estimator
  \[\Theta(\Pn)=\sum_{y,z,a,w}y\P(y|a,z,w)\hat g_{\delta}(a\mid
    w)\P(z,w),\]
  where $\hat g_{\delta}$ is the MLE of $g_{\delta}$, obtained by
  substitution of the MLE of $g$.

  The algebraic derivations described here are lengthy and not
  particularly illuminating, and are therefore omitted from the proof.
\end{proof}

\subsection{Proof of Corollary \ref{coro:tilt}}
\begin{proof}
  In this proof we use the notation $\pi(w)=g(1\mid w)$. From the
  parameterization
  $e(a\mid z, w) = g(a\mid w)q(z\mid a, w) / r(z\mid w)$, note
  that
  \[\phi(a,w) = \int
    m(a,z,w)\dd \P(z\mid w).\] Thus, $D^A_{\eta,\delta}(o)$ from
  Lemma~\ref{lemma:tilt} may be written as follows
  \begin{align*}
    D^A_{\eta,\delta}(o) &= \sum_{t\in\{0,1\}}\frac{g_{\delta}(a\mid w)}{g(a\mid w)}\int
                           m(t,z,w)\dd \P(z\mid w)\{I(a=t) - g_\delta(t\mid w)\}\\
    =& \sum_{t\in\{0,1\}}\int
       m(t,z,w)\dd \P(z\mid w)\left[a\frac{\pi_\delta(w)}{\pi(w)}\{t - g_\delta(t\mid w)\}+\right.\\
                         &\left.(1-a)\frac{1-\pi_\delta(w)}{1-\pi(w)}\{1-t - g_\delta(t\mid w)\}\right]
  \end{align*}
  Note that
  \[\pi_\delta(w)\{t - g_\delta(t\mid w)\} =
    -\{1-\pi_\delta(w)\}\{1-t - g_\delta(t\mid w)\} =
    (2t-1)\pi_\delta(w)\{1-\pi_\delta(w)\}.\]
  Thus,
  \begin{align*}
    D^A_{\eta,\delta}(o)  =& \sum_{t\in\{0,1\}}\int
                             m(t,z,w)\dd \P(z\mid w) (2t-1)\pi_\delta(w)\{1-\pi_\delta(w)\}\left[\frac{a}{\pi(w)}-
                             \frac{1-a}{1-\pi(w)}\right]\\
    =&\frac{\pi_\delta(w)\{1-\pi_\delta(w)\}}{\pi(w)\{1-\pi(w)\}}\{a - \pi(w)\}\sum_{t\in\{0,1\}}\int
       m(t,z,w)\dd \P(z\mid w)
       (2t-1)
  \end{align*}
  since
  \[\frac{\pi_\delta(w)\{1-\pi_\delta(w)\}}{\pi(w)\{1-\pi(w)\}}
    = \frac{\delta}{\{\delta\pi(w)+1-\pi(w)\}^2},\]
  expanding the sum in $t$ concludes the proof.
\end{proof}
\subsection{Proof of Theorem \ref{ass:lin1}}
\begin{proof}
  Let $\Pnj$ denote
  the empirical distribution of the prediction set ${\cal V}_j$, and let
  $\Gnj$ denote the associated empirical process
  $\sqrt{n/J}(\Pnj-\P)$. Note that
  \[\thetaaipw = \frac{1}{J}\sum_{j=1}^J\Pnj D_{\hat
      \eta_j,\delta},\,\,\,\theta(\delta)=\P D_{\eta}.\]Thus,
  \[  \sqrt{n}\{\thetaaipw - \theta(\delta)\}=\Gn \{D_{\eta,\delta}
    - \theta(\delta)\} + R_{n,1}(\delta) + R_{n,2}(\delta),\]
  where
  \[  R_{n,1}(\delta)  =\frac{1}{\sqrt{J}}\sum_{j=1}^J\Gnj(D_{\hat
      \eta_j,\delta} - D_{\eta,\delta}),\,\,\,
    R_{n,2}(\delta)  = \frac{\sqrt{n}}{J}\sum_{j=1}^J\P\{D_{\hat
      \eta_j,\delta}-\theta(\delta)\}.
  \]
  It remains to show that $R_{n,1}(\delta)$ and $R_{n,2}(\delta)$ are
  $o_P(1)$.  Theorem \ref{theo:dr1} together with the Cauchy-Schwartz
  inequality and assumption \ref{ass:sec1} of the theorem shows that
  $||R_{n,2}||_{\Delta}=o_P(1)$. For $||R_{n,1}||_{\Delta}$ we use
  empirical process theory to argue conditional on the training sample
  ${\cal T}_j$. In particular, Lemma 19.33 of \cite{vanderVaart98}
  applied to the class of functions
  ${\cal F} = \{D_{\hat \eta_j,\delta} - D_{\eta,\delta}\}$ (which
  consists of one element) yields
  \[E\left\{\big|\Gnj (D_{\hat \eta_j,\delta} - D_{\eta,\delta})\big|
      \,\bigg|\, {\cal T}_j\right\}\lesssim \frac{2C\log 2}{n^{1/2}} +
    ||D_{\hat \eta_j,\delta} - D_{\eta,\delta}||(\log 2)^{1/2}\] By
  assumption \ref{ass:sec1}, the left hand side is $o_P(1)$. Lemma 6.1
  of \cite{chernozhukov2018double} may now be used to argue that
  conditional convergence implies unconditional convergence, concluding
  the proof.

\end{proof}

\subsection{Proof of Theorem \ref{ass:lin2}}
Let $||f||_{\Delta}=\sup_{\delta\in\Delta}|f(\delta)|$. Let $\Pnj$
denote the empirical distribution of the prediction set ${\cal V}_j$,
and let $\Gnj$ denote the associated empirical process
$\sqrt{n/J}(\Pnj-\P)$. Note that
\[\thetaaipw = \frac{1}{J}\sum_{j=1}^J\Pnj D_{\hat
    \eta_j,\delta},\,\,\,\theta(\delta)=\P D_{\eta}.\]Thus,
\[  \sqrt{n}\{\thetaaipw - \theta(\delta)\}=\Gn \{D_{\eta,\delta}
  - \theta(\delta)\} + R_{n,1}(\delta) + R_{n,2}(\delta),\]
where
\[  R_{n,1}(\delta)  =\frac{1}{\sqrt{J}}\sum_{j=1}^J\Gnj(D_{\hat
    \eta_j,\delta} - D_{\eta,\delta}),\,\,\,
  R_{n,2}(\delta)  = \frac{\sqrt{n}}{J}\sum_{j=1}^J\P\{D_{\hat
    \eta_j,\delta}-\theta(\delta)\}.
\]
The map $\delta\mapsto \bar D_{\eta,\delta}$ is Lipschitz, which
implies that the class
${\cal F} = \{\bar D_{\eta,\delta}:\delta \in \Delta\}$ has bounded
bracketing numbers \citep[Theorem 2.7.11
of][]{vanderVaart&Wellner96}. Therefore, $\cal F$ is Donsker and
$\Gn \{D_{\eta,\delta} - \theta(\delta)\} \rightsquigarrow \mathbb
G(\delta)$ in $\ell^\infty(\Delta)$.

It remains to show that $||R_{n,1}||_{\Delta}$ and
$||R_{n,2}||_{\Delta}$ are $o_P(1)$. Theorem \ref{theo:dr2} together
with the assumptions of the theorem and the Cauchy-Schwartz
inequality, show that $||R_{n,2}||_{\Delta}=o_P(1)$. For
$||R_{n,1}||_{\Delta}$ we use empirical process theory to argue
conditional on the training sample ${\cal T}_j$. Let
${\cal F}_n^j=\{D_{\hat \eta_j,\delta} -
D_{\eta,\delta}:\delta\in\Delta\}$. Because the function
$\hat\eta_j$ is fixed given the training data, we can apply Theorem
2.14.2 of \cite{vanderVaart&Wellner96} to obtain
\[E\left\{\sup_{f\in {\cal F}_n^j}|\Gnj f| \,\,\bigg|\,\, {\cal
      T}_j\right\}\lesssim ||F^j_n||\int_0^1\sqrt{1+N_{[\,]}(\epsilon
    ||F_n^j||, {\cal F}_n^j, L_2(\P))}\dd\epsilon, \] where
$N_{[\,]}(\epsilon ||F_n^j||, {\cal F}_n^j, L_2(\P))$ is the
bracketing number and we take
$F_n^j=\sup_{\delta\in\Delta}|D_{\hat \eta_j,\delta} -
D_{\eta,\delta}|$ as an envelope for the class ${\cal
  F}_n^j$. Theorem 2.7.2 of \cite{vanderVaart&Wellner96} shows
\[\log N_{[\,]}(\epsilon ||F_n^j||, {\cal F}_n^j, L_2(\P))\lesssim
  \frac{1}{\epsilon ||F_n^j||}.\]
This shows
\begin{align*}||F^j_n||\int_0^1\sqrt{1+N_{[\,]}(\epsilon
  ||F_n^j||, {\cal F}_n^j, L_2(\P))}\dd\epsilon &\lesssim
                                                  \int_0^1\sqrt{||F^j_n||^2+\frac{||F^j_n||}{\epsilon}}\dd\epsilon\\
                                                &\leq
                                                  ||F^j_n||+||F^j_n||^{1/2}\int_0^1\frac{1}{\epsilon^{1/2}}\dd\epsilon\\
                                                &\leq ||F^j_n|| + 2
                                                  ||F^j_n||^{1/2}.
\end{align*}
Since $||F^j_n||=o_P(1)$, this shows
$\sup_{f\in {\cal F}_n^j}\Gnj f=o_P(1)$ for each $j$, conditional on
${\cal T}_j$.  and thus $||R_{n,1}||_{\Delta}=o_P(1)$, concluding the
proof of the theorem.

\section{Second order representation of the expectation of the EIF}
\begin{theorem}\label{theo:dr1}
  Let $d(A,W)$ satisfy assumption \ref{ass:inv}. Denote
  $m_d(z,a,w)=m(z,d(a,w), w)$. Let $q_1(z\mid a, w)$ denote any density
  compatible with $\phi_1(a,w)$. That is, let $q_1$ be such that
  \[\phi_1(a,w) = \int \frac{g_1(a\mid w)}{e_1(a\mid z, w)}m_{1,d}(z,a,w)q_1(z\mid
    a,w)d\nu(z),\] and define $r_1 = g_1q_1/e_1$. In this theorem we
  denote $\dd\xi(w) =\dd \P(w)$. We have
  \begin{eqnarray*}
    \P D_{\eta_1, \delta} - \theta(\delta)&=&\int g_{\delta}\left(\frac{e}{e_1} -
                                              1\right)(m-m_1)r\dd\kappa\dd\xi\\ &&+ \int\frac{e}{e_1}(g_{\delta,1} -
                                                                                   g_{\delta})(m-m_1)r\dd\kappa\dd\xi \\&&+\int m_{1,d}(r_1-r)(g-g_1)\dd\kappa\dd\xi
  \end{eqnarray*}
\end{theorem}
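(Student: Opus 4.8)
The plan is to obtain the displayed three-term representation by a direct expansion of $\P D_{\eta_1,\delta}$: evaluate each of the three building blocks $D^Y_{\eta_1,\delta}$, $D^A_{\eta_1,\delta}$, $D^{Z,W}_{\eta_1,\delta}$ from Theorem~\ref{theo:eif} and Lemma~\ref{lemma:mtp} against the true law $\P$, add them, subtract $\theta(\delta)$, and reorganize algebraically. Two elementary facts do essentially all of the work. The first is the change of variables underlying (\ref{eq:gdelta}): under \ref{ass:inv}, for any conditional density $\tilde g(\cdot\mid w)$ with shifted version $\tilde g_\delta$ and any integrable $f$, $\int f(d(a,w))\,\tilde g(a\mid w)\,\dd\kappa(a)=\int f(a)\,\tilde g_\delta(a\mid w)\,\dd\kappa(a)$; I will invoke this with $\tilde g\in\{g,g_1\}$ (hence with $\tilde g_\delta\in\{g_\delta,g_{\delta,1}\}$). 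The second is the representation of the auxiliary parameter built into the definition $r_1=g_1q_1/e_1$, namely $\phi_1(a,w)=\int m_1(d(a,w),z,w)\,r_1(z\mid w)\,\dd\nu(z)=\int m_{1,d}(z,a,w)\,r_1(z\mid w)\,\dd\nu(z)$, which is just (\ref{eq:paramr}) with $(m,r)$ replaced by $(m_1,r_1)$.

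I would then evaluate the three expectations, using throughout the factorization $\dd\P(a,z,w)=e(a\mid z,w)\,r(z\mid w)\,\dd\kappa(a)\,\dd\nu(z)\,\dd\xi(w)$ and writing $\int$ for integration against $\dd\kappa(a)\,\dd\nu(z)\,\dd\xi(w)$. Conditioning on $(A,Z,W)$ and using $\E(Y\mid A,Z,W)=m$ gives $\P D^Y_{\eta_1,\delta}=\int g_{\delta,1}(e/e_1)(m-m_1)\,r$. Writing $\P D^A_{\eta_1,\delta}=\E\,\phi_1(A,W)-\E_W\big[\int\phi_1(a,W)g_1(a\mid W)\,\dd\kappa(a)\big]$ and substituting the integral form of $\phi_1$ gives $\int m_{1,d}\,r_1\,g-\int m_{1,d}\,r_1\,g_1$. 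Taking the expectation over $(Z,W)$ in $D^{Z,W}_{\eta_1,\delta}$ and applying the change of variables to the inner $a$-integral turns $\int m_1(a,z,w)g_{\delta,1}(a\mid w)\,\dd\kappa(a)$ into $\int m_{1,d}(z,a,w)g_1(a\mid w)\,\dd\kappa(a)$, so $\P D^{Z,W}_{\eta_1,\delta}=\int m_{1,d}\,g_1\,r$; the same change of variables rewrites (\ref{eq:direct}) as $\theta(\delta)=\int m_d\,g\,r$. Summing and subtracting,
\begin{align*}
\P D_{\eta_1,\delta}-\theta(\delta)
&=\int g_{\delta,1}\frac{e}{e_1}(m-m_1)\,r\\
&\quad{}+\int m_{1,d}\,r_1\,g-\int m_{1,d}\,r_1\,g_1+\int m_{1,d}\,g_1\,r-\int m_d\,g\,r .
\end{align*}

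It then remains to reshape this into $T_1+T_2+T_3$, with $T_1=\int g_\delta(e/e_1-1)(m-m_1)r$, $T_2=\int(e/e_1)(g_{\delta,1}-g_\delta)(m-m_1)r$ and $T_3=\int m_{1,d}(r_1-r)(g-g_1)$. I would use $g_{\delta,1}(e/e_1)=g_\delta(e/e_1-1)+(e/e_1)(g_{\delta,1}-g_\delta)+g_\delta$, which splits the first term above into $T_1+T_2+\int g_\delta(m-m_1)r$. Expanding $\int g_\delta(m-m_1)r=\int m_d\,g\,r-\int m_{1,d}\,g\,r$ by the change of variables, the $\int m_d\,g\,r$ piece cancels the stray $-\int m_d\,g\,r$, leaving $-\int m_{1,d}\,g\,r$, which together with the three other $m_{1,d}$-integrals combines via $-g r+r_1 g-r_1 g_1+g_1 r=(r_1-r)(g-g_1)$ into exactly $T_3$.

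I expect the only real difficulty to be bookkeeping discipline: keeping straight which objects live at the truth versus at $\eta_1$ (especially $g_\delta$ versus $g_{\delta,1}$ and $m_d$ versus $m_{1,d}$), and tracking that $T_1,T_2$ are written in the shifted-density parametrization while $T_3$ and $\theta(\delta)$ are most naturally written in the $d$-parametrization, so every passage between the two must route through the change-of-variables identity---the one place where \ref{ass:inv} is genuinely used. No analytic estimates are needed, since the claim is an exact identity; once it is in hand the consequences are immediate, since each $T_k$ is a product of two nuisance discrepancies (yielding the second-order structure exploited in Theorem~\ref{ass:lin1}), and the special cases $g_1=g$ (killing $T_2$ and $T_3$) combined with $e_1=e$ or $m_1=m$ (killing $T_1$), or $m_1=m$ together with $\phi_1=\phi$ (killing all three), recover Lemma~\ref{lemma:dr1}.
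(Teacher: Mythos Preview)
Your proposal is correct and follows essentially the same approach as the paper: compute $\P D^Y_{\eta_1,\delta}$, $\P D^A_{\eta_1,\delta}$, $\P D^{Z,W}_{\eta_1,\delta}$ separately against the true law, invoke the change-of-variables identity guaranteed by \ref{ass:inv} to pass between the $g_\delta$ and $d$-parametrizations, and regroup the resulting integrals into the three product-of-differences terms. The paper organizes the algebra slightly differently (it first combines $\P D^Y_{\eta_1,\delta}+\P D^{Z,W}_{\eta_1,\delta}-\theta(\delta)$ into $T_1+T_2+\int(g_{1,\delta}-g_\delta)m_1 r$ and only then brings in $\P D^A_{\eta_1,\delta}$ and the change of variables), but the ingredients and the logic are the same.
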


\begin{proof}
  Note that
  \begin{eqnarray}
    \P D^Y_{\eta_1, \delta} + \P D^{Z,W}_{\eta,\delta}-\theta(\delta)
    &=&\int\frac{g_{\delta}}{e_1}(m-m_1)er\dd\kappa\dd\xi - \int
        g_{\delta}(m-m_1)r\dd\kappa\dd\xi\notag\\ &&+\int
                                                     \frac{e}{e_1}(g_{1,\delta}-g_{\delta})(m-m_1)r\dd\kappa\dd\xi
                                                     + \int m(g_{1,\delta} -
                                                     g_{\delta})r\dd\kappa\dd\xi\notag\\
    &=&\int g_{\delta}\left(\frac{e}{e_1}-1\right)(m-m_1)r\dd\kappa\dd\xi\label{eq:valid}\\
    &&+\int\frac{e}{e_1}(g_{1,\delta} -
       g_{\delta})(m-m_1)r\dd\kappa\dd\xi\notag\\
    &&+\int(g_{1,\delta} - g_{\delta})m_1r\dd\kappa\dd\xi.\notag
  \end{eqnarray}
  We have
  \begin{equation}
    \P D^A_{\eta_1,\delta}=  \P\left(\phi_1 - \int\phi_1 g_1d\nu\right) =\int (g - g_1)m_{1,d}\frac{g_1q_1}{e_1}\dd\kappa\dd\xi.\label{eq:au1}
  \end{equation}
  Under \ref{ass:inv}, we can change variables in the following integral to obtain
  \begin{align*}
    \int(g_{1,\delta} - g_{\delta})m_1r\dd\kappa\dd\xi & =\int(g_1 -
                                                         g)m_{1,d}r\dd\kappa\dd\xi\\
                                                       & =\int(g_1 -
                                                         g)m_{1,d}\frac{gq}{e}\dd\kappa\dd\xi,
  \end{align*}
  where we used the fact that
  $r(z\mid w) = g(a\mid w)q(z\mid a, w)/e(a\mid z,w)$. Adding
  this quantity in both sides of (\ref{eq:au1}) we get
  \begin{eqnarray*}
    \P D^A_{\eta_1,\delta}+\int(g_1 - g)m_{1,d}r\dd\kappa\dd\xi &=&
                                                                    \int\frac{e_1q_1}{g_1}m_{1,d}(g_1-g)\dd\kappa\dd\xi\\ &&-
                                                                                                                             \int\frac{eq}{g}m_{d}(g-g_1)\dd\kappa\dd\xi\\
                                                                &&+
                                                                   \int\frac{eq}{g}(m_{d}-m_{1,d})(g-g_1)\dd\kappa\dd\xi\\
                                                                &=& \int
                                                                    (\phi_1-\phi)(g-g_1)\dd\kappa\dd\xi\\
                                                                &&+ \int\frac{eq}{g}(m_{d}-m_{1,d})(g-g_1)\dd\kappa\dd\xi.
  \end{eqnarray*}
\end{proof}

\begin{theorem}\label{theo:dr2}
  Define $c(w) = \{\int_a \exp(\delta a)g(a\mid w)\}^{-1}$, and let
  $c_1(w)$ be defined analogously. Let $b(a) = \exp(\delta a)$. Using
  the same notation as in Theorem~\ref{theo:dr1}, we have
  \begin{eqnarray*}
    \P D_{\eta_1, \delta} - \theta(\delta)&=&\int g_{\delta}\left(\frac{e}{e_1} -
                                              1\right)(m-m_1)r\dd\kappa\dd\xi\\
                                          &&+ \int\frac{e}{e_1}(g_{1,\delta} -
                                             g_{\delta})(m-m_1)r\dd\kappa\dd\xi\\
                                          && +\int(g_{1,\delta} -
                                             g_{\delta})\{(m_1-m)r-(\phi_1-\phi)\}\dd\kappa\dd\xi\\
                                          &&-\int\left\{(c_1-c)^2\int b
                                             g_1\phi\dd\kappa\int b
                                             g\dd\kappa\right\}\dd\xi\\
                                          &&+\int \left\{(c_1-c)\int b\phi(g-g_1)\dd\kappa\right\}\dd\xi
  \end{eqnarray*}

\end{theorem}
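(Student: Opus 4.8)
The plan is to split $\P D_{\eta_1,\delta} - \theta(\delta)$ into the contribution of the outcome and mediator/covariate scores and the contribution of the exposure score. For the former, the identity obtained in \eqref{eq:valid} during the proof of Theorem~\ref{theo:dr1},
\[\P D^Y_{\eta_1,\delta} + \P D^{Z,W}_{\eta_1,\delta} - \theta(\delta) = \int g_{\delta}\left(\frac{e}{e_1} - 1\right)(m - m_1)r\,\dd\kappa\,\dd\xi + \int\frac{e}{e_1}(g_{1,\delta} - g_{\delta})(m - m_1)r\,\dd\kappa\,\dd\xi + \int(g_{1,\delta} - g_{\delta})m_1 r\,\dd\kappa\,\dd\xi,\]
does not invoke \ref{ass:inv} and therefore holds verbatim for the exponential tilt; it already produces the first two displayed terms of the theorem and leaves the ``auxiliary'' term $\int(g_{1,\delta} - g_{\delta})m_1 r\,\dd\kappa\,\dd\xi$, which equals $\int(g_{1,\delta}-g_{\delta})\bigl\{\phi + \int(m_1-m)r\,\dd\nu\bigr\}\dd\kappa\,\dd\xi$ because $\phi = \int m r\,\dd\nu$, to be combined with $\P D^A_{\eta_1,\delta}$.

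For the second step I would compute $\P D^A_{\eta_1,\delta}$ from the form of $D^A$ in Lemma~\ref{lemma:tilt}. Writing $g_{\delta} = c\, b\, g$ and $g_{1,\delta} = c_1\, b\, g_1$ with $b(a) = \exp(\delta a)$, and using the normalization identities $\int b\, g\,\dd\kappa = 1/c$ and $\int b\, g_1\,\dd\kappa = 1/c_1$, taking $\P$-expectations gives
\[\P D^A_{\eta_1,\delta} = \int c_1\left(\int b\,\phi_1\, g\,\dd\kappa\right)\dd\xi - \int \frac{c_1^2}{c}\left(\int b\,\phi_1\, g_1\,\dd\kappa\right)\dd\xi.\]
Adding the auxiliary term and regrouping its $(m_1-m)r - (\phi_1-\phi)$ piece (using $\phi = \phi_1 + (\phi-\phi_1)$) into the third displayed term of the theorem leaves a combination of $\int b\,\phi_1\, g\,\dd\kappa$, $\int b\,\phi_1\, g_1\,\dd\kappa$ and $c_1^2/c$. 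Writing $c_1 = c + (c_1 - c)$, so that $c_1 - c_1^2/c = -(c_1 - c) - (c_1 - c)^2/c$, and recalling $1/c = \int b\, g\,\dd\kappa$, a careful rearrangement cancels the pieces that are only first order in $g - g_1$ and yields the last two displayed terms, $-\int(c_1-c)^2\bigl(\int b\, g_1\,\phi\,\dd\kappa\bigr)\bigl(\int b\, g\,\dd\kappa\bigr)\dd\xi$ and $\int(c_1-c)\bigl(\int b\,\phi\,(g - g_1)\,\dd\kappa\bigr)\dd\xi$.

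I expect the main obstacle to be the nonlinear dependence of $g_{\delta}$ on $g$ through the normalizing constant $c(w) = \{\int b\, g\,\dd\kappa\}^{-1}$. For modified treatment policies, assumption~\ref{ass:inv} allows a change of variables that makes $g_{\delta}$ an ordinary reweighting of $g$, so the remainder in Theorem~\ref{theo:dr1} is a clean bilinear form in the nuisance errors; here no such device is available and $c_1$ enters quadratically, so one must expand $c_1$ about $c$, retain both the cross term $(c_1-c)$ and the quadratic term $(c_1-c)^2$, and check that every contribution that is first order in $g-g_1$ cancels against the corresponding pieces of $\P D^A_{\eta_1,\delta}$ and of the auxiliary term. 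Managing this cancellation at the same time as the substitution $\int m_1 r\,\dd\nu = \phi + \int(m_1-m)r\,\dd\nu$ is the only delicate bookkeeping; once it is done, the two surviving terms are manifestly second-order products of nuisance estimation errors, and the rate assumption $\norm{\hat c - c}^2 = o_\P(n^{-1/2})$ of Theorem~\ref{ass:lin2} makes the full remainder $o_\P(n^{-1/2})$.
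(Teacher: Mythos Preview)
Your approach is essentially the same as the paper's. Both proofs invoke display~\eqref{eq:valid} unchanged for the $D^Y+D^{Z,W}$ contribution, peel off the third displayed term of the theorem from the auxiliary piece $\int(g_{1,\delta}-g_\delta)m_1r$ by adding and subtracting $(\phi_1-\phi)$, and then evaluate $\P D^A_{\eta_1,\delta}$ together with the remaining $\int(g_{1,\delta}-g_\delta)\phi$-type term by writing $g_\delta=c\,b\,g$, $g_{1,\delta}=c_1\,b\,g_1$, using $\int b g\,\dd\kappa=1/c$, and expanding in powers of $(c_1-c)$; your identity $c_1-c_1^2/c=-(c_1-c)-(c_1-c)^2/c$ is exactly the algebra behind the paper's chain leading to~\eqref{eq:intone}. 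One bookkeeping point to watch: after extracting term~3 the leftover is $\int(g_{1,\delta}-g_\delta)\phi_1$, and your $\P D^A_{\eta_1,\delta}$ also carries $\phi_1$, so your rearrangement naturally produces the last two terms with $\phi_1$ rather than the $\phi$ appearing in the statement; the paper's own derivation writes $\phi$ in both places, so be prepared for that discrepancy when you compare line by line.
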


\begin{proof}
  Let
  $q_1(z\mid a,w)$ be any density compatible with $\phi_1$. That is
  \[\phi_1(a,w)=\int \frac{g_1(a\mid w)}{e_1(a\mid z,w)}m_1(a,z,w)q_1(z\mid a,w)\dd\nu(z)\]
  Note that display (\ref{eq:valid}) is also valid here. Note also that
  \[\int(g_{1,\delta} - g_{\delta})m_1r\dd\kappa\dd\xi = \int(g_{1,\delta} -
    g_{\delta})\phi\dd\kappa\dd\xi + \int(g_{1,\delta} -
    g_{\delta})\{(m_1-m)r-(\phi_1-\phi)\}\dd\kappa\dd\xi.\] Note that
  $g_{1,\delta}(a\mid w) = c_1(w)b(a)g_1(a\mid w)$. We have
  \begin{align}
    \P D_{\eta_1}^A +& \int(g_{\delta,1} - g_{\delta})\phi\dd\xi\notag\\
                     &=\int\left\{\int\frac{g_{1,\delta}}{g_1}\phi g\dd\kappa
                       -\int \frac{g_{1,\delta}}{g_1}g\dd\kappa\int\phi g_{1,\delta}\dd\kappa + \int(g_{1,\delta} -
                       g_\delta)\phi\dd\kappa\right\}\dd\xi\notag\\
                     &=\int\left\{\frac{g_{1,\delta}}{g_1}g\phi\dd\kappa - \int g_\delta\phi\dd\kappa + \int
                       g_{1,\delta}\phi\dd\kappa\left[1-\int\frac{g_{1,\delta}}{g_1}g\dd\kappa\right]\right\}\dd\xi\notag\\
                     &=\int\left\{c_1\int
                       b\phi g\dd\kappa
                       - c_1\int
                       b\phi
                       g\dd\kappa
                       + c_1\int
                       bg\phi\dd\kappa\int(c-c_1)bg\dd\kappa\right\}\dd\xi\notag\\ &=\int(c_1-c)\left\{\int b\phi g\dd\kappa - c_1\int b
                                                                                     g_1\phi\dd\kappa\int b g\dd\kappa\right\}\dd\xi\notag\\
                     &=\int(c_1-c)\left\{\int b\phi g\dd\kappa - c\int b
                       g_1\phi\dd\kappa\int b g\dd\kappa - (c_1-c)\int b
                       g_1\phi\dd\kappa\int b g\dd\kappa\right\}\dd\xi\notag\\
                     &=\int\left\{-(c_1-c)^2\int b
                       g_1\phi\dd\kappa\int b g\dd\kappa + (c_1-c)\left[\int
                       b\phi g\dd\kappa - \int
                       bg_1\phi\dd\kappa\right]\right\}\dd\xi\label{eq:intone}\\
                     &=\int\left\{-(c_1-c)^2\int b
                       g_1\phi\dd\kappa\int b g\dd\kappa + (c_1-c)\int b\phi(g-g_1)\dd\kappa\right\}\dd\xi\notag,
  \end{align}
  where (\ref{eq:intone}) follows from $c\int b g\dd\kappa = 1$
\end{proof}

\bibliographystyle{plainnat}
\bibliography{refs}

\begin{thebibliography}{84}
\providecommand{\natexlab}[1]{#1}
\providecommand{\url}[1]{\texttt{#1}}
\expandafter\ifx\csname urlstyle\endcsname\relax
  \providecommand{\doi}[1]{doi: #1}\else
  \providecommand{\doi}{doi: \begingroup \urlstyle{rm}\Url}\fi

\bibitem[Avin et~al.(2005)Avin, Shpitser, and Pearl]{avin2005identifiability}
Chen Avin, Ilya Shpitser, and Judea Pearl.
\newblock Identifiability of path-specific effects.
\newblock In \emph{IJCAI International Joint Conference on Artificial
  Intelligence}, pages 357--363, 2005.

\bibitem[Baron and Kenny(1986)]{baron1986moderator}
Reuben~M Baron and David~A Kenny.
\newblock The moderator--mediator variable distinction in social psychological
  research: Conceptual, strategic, and statistical considerations.
\newblock \emph{Journal of personality and social psychology}, 51\penalty0
  (6):\penalty0 1173, 1986.

\bibitem[Begun et~al.(1983)Begun, Hall, Huang, Wellner,
  et~al.]{begun1983information}
Janet~M Begun, WJ~Hall, Wei-Min Huang, Jon~A Wellner, et~al.
\newblock Information and asymptotic efficiency in parametric-nonparametric
  models.
\newblock \emph{The Annals of Statistics}, 11\penalty0 (2):\penalty0 432--452,
  1983.

\bibitem[Belloni et~al.(2015)Belloni, Chernozhukov, Chetverikov, and
  Wei]{belloni2015uniformly}
Alexandre Belloni, Victor Chernozhukov, Denis Chetverikov, and Ying Wei.
\newblock Uniformly valid post-regularization confidence regions for many
  functional parameters in z-estimation framework.
\newblock \emph{arXiv preprint arXiv:1512.07619}, 2015.

\bibitem[Benkeser and van~der Laan(2016)]{benkeser2016highly}
David Benkeser and Mark van~der Laan.
\newblock The highly adaptive lasso estimator.
\newblock In \emph{2016 IEEE International Conference on Data Science and
  Advanced Analytics (DSAA)}, pages 689--696. IEEE, 2016.

\bibitem[Bickel et~al.(1997)Bickel, Klaassen, Ritov, and Wellner]{Bickel97}
Peter~J Bickel, Chris~AJ Klaassen, YA'Acov Ritov, and Jon~A Wellner.
\newblock \emph{Efficient and Adaptive Estimation for Semiparametric Models}.
\newblock Springer-Verlag, 1997.

\bibitem[Breiman(1996)]{Breiman1996}
Leo Breiman.
\newblock Stacked regressions.
\newblock \emph{Machine learning}, 24\penalty0 (1):\penalty0 49--64, 1996.

\bibitem[Breiman(2001)]{breiman2001random}
Leo Breiman.
\newblock Random forests.
\newblock \emph{Machine learning}, 45\penalty0 (1):\penalty0 5--32, 2001.

\bibitem[Carpenter et~al.(2006)Carpenter, Kenward, and
  Vansteelandt]{carpenter2006comparison}
James~R Carpenter, Michael~G Kenward, and Stijn Vansteelandt.
\newblock A comparison of multiple imputation and doubly robust estimation for
  analyses with missing data.
\newblock \emph{Journal of the Royal Statistical Society: Series A (Statistics
  in Society)}, 169\penalty0 (3):\penalty0 571--584, 2006.

\bibitem[Chen and Guestrin(2016)]{chen2016xgboost}
Tianqi Chen and Carlos Guestrin.
\newblock Xgboost: A scalable tree boosting system.
\newblock In \emph{Proceedings of the 22nd acm sigkdd international conference
  on knowledge discovery and data mining}, pages 785--794. ACM, 2016.

\bibitem[Chernozhukov et~al.(2013)Chernozhukov, Chetverikov, Kato,
  et~al.]{chernozhukov2013gaussian}
Victor Chernozhukov, Denis Chetverikov, Kengo Kato, et~al.
\newblock Gaussian approximations and multiplier bootstrap for maxima of sums
  of high-dimensional random vectors.
\newblock \emph{The Annals of Statistics}, 41\penalty0 (6):\penalty0
  2786--2819, 2013.

\bibitem[Chernozhukov et~al.(2016)Chernozhukov, Chetverikov, Demirer, Duflo,
  Hansen, et~al.]{chernozhukov2016double}
Victor Chernozhukov, Denis Chetverikov, Mert Demirer, Esther Duflo, Christian
  Hansen, et~al.
\newblock Double machine learning for treatment and causal parameters.
\newblock \emph{arXiv preprint arXiv:1608.00060}, 2016.

\bibitem[Chernozhukov et~al.(2018)Chernozhukov, Chetverikov, Demirer, Duflo,
  Hansen, Newey, and Robins]{chernozhukov2018double}
Victor Chernozhukov, Denis Chetverikov, Mert Demirer, Esther Duflo, Christian
  Hansen, Whitney Newey, and James Robins.
\newblock Double/debiased machine learning for treatment and structural
  parameters.
\newblock \emph{The Econometrics Journal}, 21\penalty0 (1):\penalty0 C1--C68,
  2018.

\bibitem[Cole and Hern{\'a}n(2002)]{cole2002fallibility}
Stephen~R Cole and Miguel~A Hern{\'a}n.
\newblock Fallibility in estimating direct effects.
\newblock \emph{International journal of epidemiology}, 31\penalty0
  (1):\penalty0 163--165, 2002.

\bibitem[Coyle and Hejazi(2018)]{coyle2018hal9001}
Jeremy~R Coyle and Nima~S Hejazi.
\newblock \emph{{hal9001}: The Scalable {Highly Adaptive LASSO}}, 2018.
\newblock URL \url{https://github.com/tlverse/hal9001}.
\newblock R package version 0.2.1.

\bibitem[Coyle et~al.(2018)Coyle, Hejazi, Malenica, and Sofrygin]{coyle2018sl3}
Jeremy~R Coyle, Nima~S Hejazi, Ivana Malenica, and Oleg Sofrygin.
\newblock {sl3}: Modern pipelines for machine learning and {Super Learning}.
\newblock \url{https://github.com/tlverse/sl3}, 2018.
\newblock URL \url{https://doi.org/10.5281/zenodo.1342294}.
\newblock {R} package version 1.1.0.

\bibitem[Dawid(2000)]{dawid2000causal}
A~Philip Dawid.
\newblock Causal inference without counterfactuals.
\newblock \emph{Journal of the American Statistical Association}, 95\penalty0
  (450):\penalty0 407--424, 2000.

\bibitem[D\'iaz and {van der Laan}(2012)]{Diaz12}
Iv\'an D\'iaz and Mark~J {van der Laan}.
\newblock Population intervention causal effects based on stochastic
  interventions.
\newblock \emph{Biometrics}, 68\penalty0 (2):\penalty0 541--549, 2012.

\bibitem[D{\'\i}az and van~der Laan(2013)]{diaz2013assessing}
Iv{\'a}n D{\'\i}az and Mark~J van~der Laan.
\newblock Assessing the causal effect of policies: an example using stochastic
  interventions.
\newblock \emph{The international journal of biostatistics}, 9\penalty0
  (2):\penalty0 161--174, 2013.

\bibitem[D{\'\i}az and {van der Laan}(2018)]{diaz2018stochastic}
Iv{\'a}n D{\'\i}az and Mark~J {van der Laan}.
\newblock Stochastic treatment regimes.
\newblock In \emph{Targeted Learning in Data Science}, pages 219--232.
  Springer, 2018.

\bibitem[Didelez et~al.(2006)Didelez, Dawid, and Geneletti]{didelez2006direct}
Vanessa Didelez, A~Philip Dawid, and Sara Geneletti.
\newblock Direct and indirect effects of sequential treatments.
\newblock In \emph{Proceedings of the Twenty-Second Conference on Uncertainty
  in Artificial Intelligence}, pages 138--146. AUAI Press, 2006.

\bibitem[Dud{\'\i}k et~al.(2014)Dud{\'\i}k, Erhan, Langford, Li,
  et~al.]{dudik2014doubly}
Miroslav Dud{\'\i}k, Dumitru Erhan, John Langford, Lihong Li, et~al.
\newblock Doubly robust policy evaluation and optimization.
\newblock \emph{Statistical Science}, 29\penalty0 (4):\penalty0 485--511, 2014.

\bibitem[Friedman et~al.(2009)Friedman, Hastie, and
  Tibshirani]{friedman2009glmnet}
Jerome Friedman, Trevor Hastie, and Rob Tibshirani.
\newblock \emph{glmnet: Lasso and elastic-net regularized generalized linear
  models}, 2009.

\bibitem[Gin{\'e} and Zinn(1984)]{gine1984some}
Evarist Gin{\'e} and Joel Zinn.
\newblock Some limit theorems for empirical processes.
\newblock \emph{The Annals of Probability}, pages 929--989, 1984.

\bibitem[Goldberger(1972)]{goldberger1972structural}
Arthur~S Goldberger.
\newblock Structural equation methods in the social sciences.
\newblock \emph{Econometrica: Journal of the Econometric Society}, pages
  979--1001, 1972.

\bibitem[Haneuse and Rotnitzky(2013)]{Haneuse2013}
Sebastian Haneuse and Andrea Rotnitzky.
\newblock Estimation of the effect of interventions that modify the received
  treatment.
\newblock \emph{Statistics in Medicine}, 2013.

\bibitem[Hejazi and D{\'\i}az(2019)]{hejazi2019medshift}
Nima~S Hejazi and Iv{\'a}n D{\'\i}az.
\newblock \emph{{medshift}: Causal mediation analysis for stochastic
  interventions in {R}}, 2019.
\newblock URL \url{https://github.com/nhejazi/medshift}.
\newblock R package version 0.0.8.

\bibitem[Imai et~al.(2010)Imai, Keele, and Tingley]{imai2010general}
Kosuke Imai, Luke Keele, and Dustin Tingley.
\newblock A general approach to causal mediation analysis.
\newblock \emph{Psychological methods}, 15\penalty0 (4):\penalty0 309, 2010.

\bibitem[Kennedy(2018{\natexlab{a}})]{kennedy2018nonparametric}
Edward~H Kennedy.
\newblock Nonparametric causal effects based on incremental propensity score
  interventions.
\newblock \emph{Journal of the American Statistical Association}, pages 1--12,
  2018{\natexlab{a}}.

\bibitem[Kennedy(2018{\natexlab{b}})]{kennedy2018npcausal}
Edward~H Kennedy.
\newblock \emph{{npcausal}: Nonparametric causal inference methods},
  2018{\natexlab{b}}.
\newblock URL \url{https://github.com/ehkennedy/npcausal}.
\newblock R package version 0.1.0.

\bibitem[Lok(2016)]{lok2016defining}
Judith~J Lok.
\newblock Defining and estimating causal direct and indirect effects when
  setting the mediator to specific values is not feasible.
\newblock \emph{Statistics in medicine}, 35\penalty0 (22):\penalty0 4008--4020,
  2016.

\bibitem[Lok(2019)]{lok2019causal}
Judith~J Lok.
\newblock Causal organic direct and indirect effects: closer to baron and
  kenny.
\newblock \emph{arXiv preprint arXiv:1903.04697}, 2019.

\bibitem[Miles et~al.(2015)Miles, Kanki, Meloni, and
  Tchetgen]{miles2015partial}
Caleb~H Miles, Phyllis Kanki, Seema Meloni, and Eric J~Tchetgen Tchetgen.
\newblock On partial identification of the pure direct effect.
\newblock \emph{arXiv preprint arXiv:1509.01652}, 2015.

\bibitem[Newey(1994)]{newey1994asymptotic}
Whitney~K Newey.
\newblock The asymptotic variance of semiparametric estimators.
\newblock \emph{Econometrica: Journal of the Econometric Society}, pages
  1349--1382, 1994.

\bibitem[Neyman(1923)]{Neyman23}
J.~Neyman.
\newblock Sur les applications de la thar des probabilites aux experiences
  agaricales: Essay des principle (1923). excerpts reprinted (1990) in english
  (d. dabrowska and t. speed), trans.
\newblock \emph{Statistical Science}, 5:\penalty0 463--472, 1923.

\bibitem[Nguyen et~al.(2019)Nguyen, Schmid, and Stuart]{nguyen2019clarifying}
Trang~Quynh Nguyen, Ian Schmid, and Elizabeth~A Stuart.
\newblock Clarifying causal mediation analysis for the applied researcher:
  Defining effects based on what we want to learn.
\newblock \emph{arXiv preprint arXiv:1904.08515}, 2019.

\bibitem[Pearl(1995)]{pearl1995causal}
Judea Pearl.
\newblock Causal diagrams for empirical research.
\newblock \emph{Biometrika}, 82\penalty0 (4):\penalty0 669--688, 1995.

\bibitem[Pearl(1998)]{pearl1998graphs}
Judea Pearl.
\newblock Graphs, causality, and structural equation models.
\newblock \emph{Sociological Methods \& Research}, 27\penalty0 (2):\penalty0
  226--284, 1998.

\bibitem[Pearl(2000)]{Pearl00}
Judea Pearl.
\newblock \emph{Causality: Models, Reasoning, and Inference}.
\newblock Cambridge University Press, Cambridge, 2000.

\bibitem[Pearl(2001)]{Pearl01}
Judea Pearl.
\newblock Direct \& indirect effects.
\newblock In \emph{Proceedings of the 17th Conference in Uncertainty in
  Artificial Intelligence}, UAI '01, pages 411--420, San Francisco, CA, USA,
  2001. Morgan Kaufmann Publishers Inc.
\newblock ISBN 1-55860-800-1.
\newblock URL \url{http://dl.acm.org/citation.cfm?id=647235.720084}.

\bibitem[Pearl(2009)]{Pearl2009}
Judea Pearl.
\newblock {Myth, Confusion, and Science in Causal Analysis}.
\newblock Technical Report R-348, Cognitive Systems Laboratory, Computer
  Science Department University of California, Los Angeles, Los Angeles, CA,
  May 2009.

\bibitem[Petersen et~al.(2006)Petersen, Sinisi, and van~der
  Laan]{petersen2006estimation}
Maya~L Petersen, Sandra~E Sinisi, and Mark~J van~der Laan.
\newblock Estimation of direct causal effects.
\newblock \emph{Epidemiology}, pages 276--284, 2006.

\bibitem[Pfanzagl and Wefelmeyer(1985)]{pfanzagl1982contributions}
J~Pfanzagl and W~Wefelmeyer.
\newblock Contributions to a general asymptotic statistical theory.
\newblock \emph{Statistics \& Risk Modeling}, 3\penalty0 (3-4):\penalty0
  379--388, 1985.

\bibitem[Popper(1934)]{Popper34}
K.~R. Popper.
\newblock \emph{The Logic of Scientific Discovery}.
\newblock Hutchinson, London, 1934.

\bibitem[{R Core Team}(2019)]{R}
{R Core Team}.
\newblock \emph{{R}: A Language and Environment for Statistical Computing}.
\newblock {R Foundation for Statistical Computing}, Vienna, Austria, 2019.
\newblock URL \url{https://www.R-project.org/}.

\bibitem[Richardson and Robins(2013)]{richardson2013single}
Thomas~S Richardson and James~M Robins.
\newblock Single world intervention graphs (swigs): A unification of the
  counterfactual and graphical approaches to causality.
\newblock \emph{Center for the Statistics and the Social Sciences, University
  of Washington Series. Working Paper}, 128\penalty0 (30):\penalty0 2013, 2013.

\bibitem[Robins(1986)]{Robins86}
James~M Robins.
\newblock A new approach to causal inference in mortality studies with
  sustained exposure periods - application to control of the healthy worker
  survivor effect.
\newblock \emph{Mathematical Modelling}, 7:\penalty0 1393--1512, 1986.

\bibitem[Robins and Greenland(1992)]{Robins&Greenland92}
James~M Robins and Sander Greenland.
\newblock Identifiability and exchangeability for direct and indirect effects.
\newblock \emph{Epidemiology}, 3\penalty0 (0):\penalty0 143--155, 1992.

\bibitem[Robins and Richardson(2010)]{robins2010alternative}
James~M Robins and Thomas~S Richardson.
\newblock Alternative graphical causal models and the identification of direct
  effects.
\newblock \emph{Causality and psychopathology: Finding the determinants of
  disorders and their cures}, pages 103--158, 2010.

\bibitem[Robins et~al.(2004)Robins, Hern{\'a}n, and SiEBERT]{robins2004effects}
James~M Robins, Miguel~A Hern{\'a}n, and UWE SiEBERT.
\newblock Effects of multiple interventions.
\newblock \emph{Comparative quantification of health risks: global and regional
  burden of disease attributable to selected major risk factors}, 1:\penalty0
  2191--2230, 2004.

\bibitem[Rubin(1974)]{Rubin74}
D.~B. Rubin.
\newblock {Estimating Causal Effects of Treatments in Randomized \&
  Nonrandomized Studies}.
\newblock \emph{Journal of Educational Psychology}, 1974.
\newblock URL \url{http://www.eric.ed.gov/ERICWebPortal/detail?accno=EJ118470}.

\bibitem[Rudolph et~al.(2017)Rudolph, Sofrygin, Zheng, and Van
  Der~Laan]{rudolph2017robust}
Kara~E Rudolph, Oleg Sofrygin, Wenjing Zheng, and Mark~J Van Der~Laan.
\newblock Robust and flexible estimation of stochastic mediation effects: a
  proposed method and example in a randomized trial setting.
\newblock \emph{Epidemiologic Methods}, 7\penalty0 (1), 2017.

\bibitem[Seaman et~al.(2012)Seaman, White, Copas, and Li]{seaman2012combining}
Shaun~R Seaman, Ian~R White, Andrew~J Copas, and Leah Li.
\newblock Combining multiple imputation and inverse-probability weighting.
\newblock \emph{Biometrics}, 68\penalty0 (1):\penalty0 129--137, 2012.

\bibitem[Spirtes et~al.(2000)Spirtes, Glymour, Scheines, Heckerman, Meek,
  Cooper, and Richardson]{spirtes2000causation}
Peter Spirtes, Clark~N Glymour, Richard Scheines, David Heckerman, Christopher
  Meek, Gregory Cooper, and Thomas Richardson.
\newblock \emph{Causation, prediction, and search}.
\newblock MIT press, 2000.

\bibitem[Stitelman et~al.(2010)Stitelman, Hubbard, and
  Jewell]{stitelman2010impact}
Ori~M Stitelman, Alan~E Hubbard, and Nicholas~P Jewell.
\newblock The impact of coarsening the explanatory variable of interest in
  making causal inferences: Implicit assumptions behind dichotomizing
  variables.
\newblock 2010.

\bibitem[Stock(1989)]{stock1989nonparametric}
James~H Stock.
\newblock Nonparametric policy analysis.
\newblock \emph{Journal of the American Statistical Association}, 84\penalty0
  (406):\penalty0 567--575, 1989.

\bibitem[Taubman et~al.(2009)Taubman, Robins, Mittleman, and
  Hern{\'a}n]{taubman2009intervening}
Sarah~L Taubman, James~M Robins, Murray~A Mittleman, and Miguel~A Hern{\'a}n.
\newblock Intervening on risk factors for coronary heart disease: an
  application of the parametric g-formula.
\newblock \emph{International journal of epidemiology}, 38\penalty0
  (6):\penalty0 1599--1611, 2009.

\bibitem[Tchetgen and Phiri(2014)]{tchetgen2014bounds}
Eric J.~Tchetgen Tchetgen and Kelesitse Phiri.
\newblock Bounds for pure direct effect.
\newblock \emph{Epidemiology (Cambridge, Mass.)}, 25\penalty0 (5):\penalty0
  775, 2014.

\bibitem[Tchetgen~Tchetgen and Shpitser(2012)]{tchetgen2012semiparametric}
Eric~J Tchetgen~Tchetgen and Ilya Shpitser.
\newblock Semiparametric theory for causal mediation analysis: efficiency
  bounds, multiple robustness, and sensitivity analysis.
\newblock \emph{Annals of Statistics}, 40\penalty0 (3):\penalty0 1816, 2012.

\bibitem[Tian(2008)]{tian2008identifying}
Jin Tian.
\newblock Identifying dynamic sequential plans.
\newblock In \emph{Proceedings of the Twenty-Fourth Conference on Uncertainty
  in Artificial Intelligence}, pages 554--561. AUAI Press, 2008.

\bibitem[{van der Laan}(2017)]{vdl2017generally}
Mark~J {van der Laan}.
\newblock A generally efficient targeted minimum loss based estimator based on
  the highly adaptive lasso.
\newblock \emph{The International Journal of Biostatistics}, 13\penalty0 (2),
  2017.

\bibitem[{van der Laan} and Benkeser(2018)]{vdl2018highly}
Mark~J {van der Laan} and David Benkeser.
\newblock Highly adaptive lasso (hal).
\newblock In \emph{Targeted Learning in Data Science}, pages 77--94. Springer,
  2018.

\bibitem[van~der Laan and Petersen(2008)]{van2008direct}
Mark~J van~der Laan and Maya~L Petersen.
\newblock Direct effect models.
\newblock \emph{The international journal of biostatistics}, 4\penalty0 (1),
  2008.

\bibitem[{van der Laan} and Robins(2003)]{vanderLaan2003}
Mark~J {van der Laan} and James~M Robins.
\newblock \emph{Unified Methods for Censored Longitudinal Data and Causality}.
\newblock Springer, New York, 2003.

\bibitem[{van der Laan} and Rose(2011)]{vanderLaanRose11}
Mark~J {van der Laan} and Sherri Rose.
\newblock \emph{Targeted Learning: Causal Inference for Observational and
  Experimental Data}.
\newblock Springer, New York, 2011.

\bibitem[{van der Laan} and Rose(2018)]{vanderLaanRose18}
Mark~J {van der Laan} and Sherri Rose.
\newblock \emph{Targeted Learning in Data Science: Causal Inference for Complex
  longitudinal Studies}.
\newblock Springer, New York, 2018.

\bibitem[{van der Laan} and Rubin(2006)]{vdl2006targeted}
Mark~J {van der Laan} and Daniel Rubin.
\newblock Targeted maximum likelihood learning.
\newblock \emph{The International Journal of Biostatistics}, 2\penalty0 (1),
  2006.

\bibitem[van~der Laan et~al.(2007)van~der Laan, Polley, and
  Hubbard]{vanderLaan&Polley&Hubbard07}
M.J. van~der Laan, E.~Polley, and A.~Hubbard.
\newblock Super learner.
\newblock \emph{Statistical Applications in Genetics \& Molecular Biology},
  6\penalty0 (25):\penalty0 Article 25, 2007.

\bibitem[van~der Vaart(1998)]{vanderVaart98}
A.~W. van~der Vaart.
\newblock \emph{Asymptotic Statistics}.
\newblock Cambridge University Press, 1998.

\bibitem[van~der Vaart(2002)]{van2002part}
Aad van~der Vaart.
\newblock Semiparameric statistics.
\newblock \emph{Lectures on Probability Theory and Statistics}, pages 331--457,
  2002.

\bibitem[{van der Vaart}(1991)]{vanderVaart91}
Aad~W {van der Vaart}.
\newblock On differentiable functionals.
\newblock \emph{Annals of Statistics}, 19:\penalty0 178--204, 1991.

\bibitem[{van der Vaart} and Wellner(1996)]{vanderVaart&Wellner96}
Aad~W {van der Vaart} and Jon~A Wellner.
\newblock \emph{Weak {C}onvergence and {E}mprical {P}rocesses}.
\newblock Springer-Verlag New York, 1996.

\bibitem[VanderWeele et~al.(2014)VanderWeele, Vansteelandt, and
  Robins]{vanderweele2014effect}
Tyler~J VanderWeele, Stijn Vansteelandt, and James~M Robins.
\newblock Effect decomposition in the presence of an exposure-induced
  mediator-outcome confounder.
\newblock \emph{Epidemiology (Cambridge, Mass.)}, 25\penalty0 (2):\penalty0
  300, 2014.

\bibitem[Vansteelandt and Daniel(2017)]{vansteelandt2017interventional}
Stijn Vansteelandt and Rhian~M Daniel.
\newblock Interventional effects for mediation analysis with multiple
  mediators.
\newblock \emph{Epidemiology (Cambridge, Mass.)}, 28\penalty0 (2):\penalty0
  258, 2017.

\bibitem[Vansteelandt and VanderWeele(2012)]{vansteelandt2012natural}
Stijn Vansteelandt and Tyler~J VanderWeele.
\newblock Natural direct and indirect effects on the exposed: effect
  decomposition under weaker assumptions.
\newblock \emph{Biometrics}, 68\penalty0 (4):\penalty0 1019--1027, 2012.

\bibitem[Vansteelandt et~al.(2010)Vansteelandt, Carpenter, and
  Kenward]{vansteelandt2010analysis}
Stijn Vansteelandt, James Carpenter, and Michael~G Kenward.
\newblock Analysis of incomplete data using inverse probability weighting and
  doubly robust estimators.
\newblock \emph{Methodology}, 2010.

\bibitem[Vansteelandt et~al.(2012)Vansteelandt, Bekaert, and
  Lange]{vansteelandt2012imputation}
Stijn Vansteelandt, Maarten Bekaert, and Theis Lange.
\newblock Imputation strategies for the estimation of natural direct and
  indirect effects.
\newblock \emph{Epidemiologic Methods}, 1\penalty0 (1):\penalty0 131--158,
  2012.

\bibitem[Wright and Ziegler(2015)]{wright2015ranger}
Marvin~N Wright and Andreas Ziegler.
\newblock Ranger: a fast implementation of random forests for high dimensional
  data in c++ and r.
\newblock \emph{arXiv preprint arXiv:1508.04409}, 2015.

\bibitem[Wright(1921)]{wright1921correlation}
Sewall Wright.
\newblock Correlation and causation.
\newblock \emph{Journal of agricultural research}, 20\penalty0 (7):\penalty0
  557--585, 1921.

\bibitem[Wright(1934)]{wright1934method}
Sewall Wright.
\newblock The method of path coefficients.
\newblock \emph{The annals of mathematical statistics}, 5\penalty0
  (3):\penalty0 161--215, 1934.

\bibitem[Young et~al.(2014)Young, Hern{\'a}n, and
  Robins]{young2014identification}
Jessica~G Young, Miguel~A Hern{\'a}n, and James~M Robins.
\newblock Identification, estimation and approximation of risk under
  interventions that depend on the natural value of treatment using
  observational data.
\newblock \emph{Epidemiologic methods}, 3\penalty0 (1):\penalty0 1--19, 2014.

\bibitem[Zheng and van~der Laan(2017)]{zheng2017longitudinal}
Wenjing Zheng and Mark van~der Laan.
\newblock Longitudinal mediation analysis with time-varying mediators and
  exposures, with application to survival outcomes.
\newblock \emph{Journal of causal inference}, 5\penalty0 (2), 2017.

\bibitem[Zheng and van~der Laan(2011)]{zheng2011cross}
Wenjing Zheng and Mark~J van~der Laan.
\newblock Cross-validated targeted minimum-loss-based estimation.
\newblock In \emph{Targeted Learning}, pages 459--474. Springer, 2011.

\bibitem[Zheng and van~der Laan(2012)]{zheng2012targeted}
Wenjing Zheng and Mark~J van~der Laan.
\newblock Targeted maximum likelihood estimation of natural direct effects.
\newblock \emph{The international journal of biostatistics}, 8\penalty0
  (1):\penalty0 1--40, 2012.

\end{thebibliography}

\end{document}